\pdfoutput=1
\documentclass[11pt]{article}
\usepackage[utf8]{inputenc}
\usepackage[margin=1in]{geometry}

\usepackage{enumitem}

\usepackage[numbers,sort]{natbib}

\usepackage{tcolorbox}

\newtcolorbox{algorithm}[1]
{
	adjusted title = {#1},
	fonttitle = \bfseries,
  	beforeafter skip = 12pt,
}

\usepackage{amsfonts,amssymb,amsmath,amsthm}
\usepackage{mathtools}
\usepackage{thmtools}

\declaretheorem{theorem}
\declaretheorem[numberwithin=section]{lemma}
\declaretheorem[sibling=lemma]{corollary}

\declaretheorem[style=definition]{definition}

\usepackage[dvipsnames]{xcolor}
\usepackage[colorlinks=true,pdfpagemode=UseNone,urlcolor=RoyalBlue,linkcolor=RoyalBlue,citecolor=OliveGreen,pdfstartview=FitH]{hyperref}

\usepackage{cleveref}

\newcommand{\Unif}{\mathrm{Unif}}

\newcommand{\OPT}{\mathrm{OPT}}
\newcommand{\closure}{\mathrm{cl}}

\renewcommand{\Pr}{\mathbf{Pr}}
\newcommand{\E}{\mathbf{E}}

\newcommand{\e}{\mathrm{e}}

\newcommand{\ForwardKernel}{\mathsf{K}}
\newcommand{\ReverseKernel}{\mathsf{R}}

\title{Settling the Matroid Secretary Problem}
\author{Zhiyi Huang\thanks{The University of Hong Kong. Email: zhiyi@cs.hku.hk}}
\date{September 2026}

\begin{document}

\maketitle

\vspace{-20pt}
\begin{abstract}
    This paper settles the Matroid Secretary Problem with an $\e$-probability-competitive algorithm.
    The algorithm is ordinal and accesses arrived elements only through comparison and independence oracles, and has expected polynomial time and oracle complexity.
\end{abstract}

\section{Introduction}

The \emph{Matroid Secretary Problem} considers a finite matroid whose elements have nonnegative values unknown to the algorithm in advance.
The elements arrive independently and uniformly in $[0, 1]$.\footnote{It is folklore that this continuous-time model and the original discrete-time model, where elements arrive in uniformly random order with the number of elements known in advance, are equivalent up to a reduction.}
When an element arrives, the algorithm immediately and irrevocably decides whether to accept it.
The set of accepted elements must always be independent.
An online algorithm is $\Gamma$-competitive if the expected total value of its accepted elements is at least $\frac{1}{\Gamma}$ times the offline optimum.

The simplest case, known as the \emph{Secretary Problem}, considers accepting at most one element.
\citet{Dynkin-SovietMath-1963} resolved it in the 1960s: reject all elements before time $\frac{1}{\e}$, and then accept the first element that is the biggest among arrived ones.
This yields an optimal $\e$-competitive ratio.

Almost two decades ago, \citet{BabaioffIK-SODA-2007,BabaioffIKK-JACM-2018} introduced the Matroid Secretary Problem.
They gave constant-competitive algorithms for several families of matroids, and conjectured that a constant-competitive algorithm exists for all matroids.

After a substantial line of research, constant-competitive algorithms are known for many families of matroids, including uniform matroids~\cite{Kleinberg-SODA-2005,Dynkin-SovietMath-1963,HajiaghayiKP-EC-2004,BabaioffIKK-APPROX-2007}, laminar matroids~\cite{ImW-SODA-2011,HuangPZ-ESA-2024,BercziLSV-IPCO-2025}, transversal matroids~\cite{DimitrovP-ICALP-2008,KesselheimRTV-ESA-2013,SotoTV-MOR-2021}, graphic matroids~\cite{BabaioffIK-SODA-2007,BanihashemHKKMO-ESA-2025,DuttingPPP-2026}, and co-graphic matroids~\cite{Soto2013-SICOIMP-2013}.
For general matroids, \citet{BabaioffIK-SODA-2007} gave an $O(\log r)$-competitive algorithm, where $r$ is the rank, i.e., the size of maximal independent sets of the matroid.
This was later improved to $O(\sqrt{\log r})$ by \citet{ChakrabortyL-SODA-2012}, and to $O(\log \log r)$ by \citet{Lachish-FOCS-2014} and \citet{FeldmanSZ-SODA-2015}.

The literature has considered different model variants, e.g., whether the matroid is known in advance or revealed incrementally via an independence oracle over arrived elements (\emph{unknown matroid}), whether the algorithm observes the values or has only comparison-oracle access (\emph{ordinal}), and whether the goal is to maximize the expected total value or to accept each element in the offline optimum with probability at least $\frac{1}{\Gamma}$ (\emph{probability-competitive}).
In each pair, the latter option is more stringent than the former; e.g., a $\Gamma$-probability-competitive algorithm is $\Gamma$-competitive by linearity of expectation.
The results above hold in various combinations of these variants.

\paragraph{Recent Breakthroughs.}
Very recently, \citet{Singla-2026} made a major breakthrough, giving a $4$-probability-competitive ordinal algorithm for all matroids in the unknown matroid model, settling the conjecture of \citet{BabaioffIK-SODA-2007}.
\citet{BercziDLSV-2026} and \citet{AbdiBHM-2026} independently obtained ordinal algorithms with the optimal $\e$-probability-competitive ratio for linear matroids, a broad family encompassing regular matroids (including graphic and co-graphic matroids) and gammoids (including laminar and transversal matroids).
They refer to the existence of an $\e$-competitive algorithm for general matroids as the Strong Matroid Secretary Conjecture.
\citet{AbdiBHM-2026} also obtained a $64$-competitive algorithm for general matroids and a tight $2$-competitive single-sample matroid prophet inequality.
Before these results, $\e$-competitive algorithms were known only for a few matroid families, including uniform matroids~\cite{BabaioffIKK-APPROX-2007} and transversal matroids~\cite{KesselheimRTV-ESA-2013}.

\paragraph{Our Results.}
We resolve the Strong Matroid Secretary Conjecture in its most stringent form.

\begin{theorem}
    \label{thm:main}
    There is an $\e$-probability-competitive ordinal algorithm for the Matroid Secretary Problem in the unknown matroid model, with expected polynomial time and oracle complexity.
\end{theorem}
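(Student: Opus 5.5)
We will design an algorithm in the continuous-time model that generalizes Dynkin's rule and analyze it element by element. The idea is to guarantee that every $e \in \OPT$ is accepted with probability at least $\frac{1}{\e}$, where $\OPT$ is the greedy (unique, after fixing a consistent tie-breaking) max-weight basis. The benchmark is the single-item calculation. If an element $e$ arrives at time $t$, and we accept it whenever it is ``the best in its span among arrived elements'' after time $\frac1\e$, then the acceptance probability is $\int_{1/\e}^1 \frac{1}{\e t}\,dt = \frac1\e$. This integral is exactly the quantity to reproduce for general matroids.

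\textbf{Step 1 (an idealized, possibly infeasible rule).} Let $A_t$ be the set of elements arrived by time $t$. At time $t \ge \frac1\e$, an arriving element $e$ is a \emph{candidate} if $e \in \OPT(A_t)$, i.e.\ $e$ is not spanned by the heavier elements of $A_t$; the test uses only comparisons and independence queries on arrived elements. For $e \in \OPT$, $e$ is always a candidate upon arrival, since $\OPT(A_t) \supseteq \OPT \cap A_t$. The naive rule ``accept every candidate that keeps the accepted set independent'' fails because earlier acceptances may block $e$.

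\textbf{Step 2 (randomized thinning).} We accept a candidate $e$ with a probability $q_t(e)$, chosen adaptively via a kernel-type coupling (forward/reverse kernels between the current accepted set and $\OPT(A_t)$). The coupling should maintain the invariant that, conditioned on the history, the accepted set $S_t$ is distributed so that every $f \in \OPT(A_t)$ is ``free'' ($S_t + f$ independent) with probability exactly $\frac{1}{\e t}$. This matches the single-item invariant. Concretely, we would maintain a distribution over independent sets in which $S_t$ is spanned by a random contraction structure, and update it by exchange arguments (matroid basis exchange / strongly base orderable-like couplings realized via independence oracles). When a new element arrives and changes $\OPT(A_t)$ by one exchange, we apply a random symmetric exchange so that marginals stay uniform across $\OPT(A_t)$. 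Then $e \in \OPT$ arriving at time $t \geq \frac{1}{\e}$ is accepted with probability $\frac{1}{\e t}$, and integrating over the uniform arrival time gives $\frac1\e$. Ordinality is preserved because every step depends only on comparisons and independence queries. Expected polynomial time follows if the exchange sampling uses rejection sampling with constant expected retries.

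\textbf{Main obstacle.} The crux is Step 2: proving that such a marginal-preserving update exists for \emph{every} matroid. For linear matroids prior work uses linear-algebraic structure, and no general symmetric-exchange coupling with exact marginals is known a priori. I would first try a fractional approach. Define the target point $x_t = \frac{1}{\e t}\mathbf{1}_{\OPT(A_t)}$-compatible in the matroid polytope of the contraction, and maintain $S_t$ as a random vertex realizing a convex combination. Each arrival is a rank-one change, and we decompose the new point by a random walk (swap rounding) whose martingale property keeps the marginals exact. The key inequality needed is that $\frac{d}{dt}$ of the free-probability, namely $-\frac{1}{\e t^2}$, is exactly matched by the acceptance rate of candidates. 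This would be verified by a differential equation in $t$ mirroring $\int \frac{dt}{t}$. The hardest point is showing that swap rounding never needs to ``undo'' an acceptance, so that irrevocability holds. I expect this to require a monotone coupling lemma for matroid bases.
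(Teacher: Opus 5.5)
Your accounting is right, and the paper's algorithm does have this property: if every element of the current optimum arriving at time $t \ge p$ is accepted with probability exactly $\frac{p}{t}$, then each $e \in \OPT$ is accepted with probability $\int_p^1 \frac{p}{t}\,dt = p\ln\frac{1}{p}$, which is $\frac{1}{\e}$ at $p = \frac{1}{\e}$. But Step~2, which you flag as the crux, is essentially the whole theorem, and the proposal does not supply it. The invariant you state cannot be propagated, for three reasons.

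\begin{itemize}
\item Conditioned on the history, the accepted set is deterministic. A workable invariant must therefore concern the law of $S_t$ given only the arrived set, and the acceptance probability must depend on the realized accepted set to correct for that conditioning.
\item A per-element freeness condition (each prospective candidate $f$ has $S_t+f$ independent with a given probability) says nothing about the joint law of $\closure(S_t)$. Yet each acceptance changes the span, and hence the freeness of all future candidates at once. No update rule can be derived from such marginals, or from a point $x_t$ in the matroid polytope.
\item Taken literally, ``free with probability exactly $\frac{1}{\e t}$'' together with ``accepted with probability exactly $\frac{1}{\e t}$'' forces accepting every free candidate. That is the naive rule you already dismissed.
\end{itemize}

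The tools you suggest do not close this gap. Exact-marginal symmetric exchanges rely on structure, such as strong base orderability, that general matroids lack. Swap rounding is offline and gives no guarantee against revising earlier choices, as you concede.

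The missing idea is to prescribe an explicit target \emph{law} for the accepted set rather than marginals. The paper uses the survival process $\mathcal{D}_{S,t}$: each $e$ arriving by time $p$ survives independently with probability $\ln\frac{\sigma_e}{p}$, where $\sigma_e$ is the first time $e$ is spanned by bigger arrived elements. Its key property is the pointwise domination $\mathcal{D}_{S,t}(A) \ge \frac{p}{t}\mathcal{D}_{S,t,e}(A)$ for every $A$ (\Cref{cor:main}). This is proved by induction on $|S|$, splitting on the biggest element and using background filtrations and the spanning-time exchange identities of \Cref{lem:spanning-time-change}. It gives the following for the state-dependent rule ``accept $e$ with probability $\frac{p}{t}\frac{\mathcal{D}_{S,t,e}(A)}{\mathcal{D}_{S,t}(A)}$'':
\begin{itemize}
\item it is a valid probability, positive only when $A+e$ is independent;
\item it yields overall acceptance probability $\frac{p}{t}$, since $\sum_A \mathcal{D}_{S,t,e}(A) = 1$;
\item by matching the evolution equations of the algorithm and the process, it keeps $A \sim \mathcal{D}_{S,t}$ at every time (\Cref{thm:simulation}).
\end{itemize}

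Efficiency needs a further ingredient absent from your plan. The ratio is not known to be computable, so the paper maintains a uniform latent vector and implements the rule through forward and reverse kernels. These satisfy an adjoint identity, survivor preservation, and monotonicity (\Cref{lem:constructive-domination}), and reverse events restore uniformity. Your remark about rejection sampling with constant retries has nothing to attach to until such a construction exists.
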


The design and analysis of our algorithm are guided by an auxiliary survival process, in which elements also arrive independently and uniformly in $[0, 1]$.
Consider the set $S_p$ of elements that arrive by time $p = \frac{1}{\e}$, and let $\OPT(S_p)$ be its optimal independent set.
For each $e \in \OPT(S_p)$, let $\sigma_e$ be the first time when it is no longer in the optimum of arrived elements;
let $\sigma_e = 1$ if $e$ is in the final optimum.
Finally, independently let each $e \in \OPT(S_p)$ survive with probability $\ln\frac{\sigma_e}{p}$.

Each element $e$ in the final optimum arrives by time $p$ with probability $p$, and in that case survives with probability $\ln\frac{\sigma_e}{p} = \ln\frac{1}{p} = 1$.
Hence, $e$ survives with probability exactly $\frac{1}{\e}$.
However, the process is not online: a survivor must arrive by time $p$, but its survival depends on later arrivals.

We design an online algorithm whose set of accepted elements is distributionally identical to the set of survivors in the above process.
More strongly, at any time $t \ge p$ and conditional on the set of elements arrived by then, the set of accepted elements has the same distribution as the survivors of the same process with horizon $t$ in place of $1$.
Like Dynkin's algorithm, our algorithm rejects all elements before time $p$.
Afterward, it accepts each arriving element that joins the current optimum with a carefully chosen probability so that the evolution dynamics match those of the survival process.

We do not know how to efficiently compute this acceptance probability, which is a ratio of two probabilities under survival processes.
Instead, we couple the algorithm with the survival process by maintaining a uniformly distributed latent vector.
The coupling implements the acceptance rule exactly, with expected polynomial time and oracle complexity.

\paragraph{Relation to Prior Work.}
The idea of controlling the entire distribution of the algorithm's outcome is inspired by both \citet{Singla-2026} and the stationary online contention resolution schemes of \citet*{AminianNN-EC-2026}.
In particular, \citet{Singla-2026} uses reversible online updates to preserve the distribution of a random configuration in the two-sided Game of Googol.
Our efficient implementation similarly maintains a uniformly distributed latent vector through forward and reverse kernels.
The algorithms of \citet{BercziDLSV-2026} and \citet{AbdiBHM-2026} for linear matroids use linear programs to prescribe marginal acceptance probabilities, bounding for every subspace the expected dimension of its intersection with the span of the accepted elements.
Our algorithm instead matches the entire distribution of the accepted set to that of the survivors of an explicit auxiliary process.

\section{Preliminaries}

\paragraph{Model.}
Consider a finite matroid $M = (E, \mathcal{I})$ and a total order $\succ$ over $E$.
We assume without loss of generality that there are no loops, i.e., every element is in some independent set, as loops can be safely omitted.
We say $e$ is bigger than $f$ if $e \succ f$.
Let $\OPT$ denote the optimal independent set with respect to $\succ$.
More generally, let $\OPT(S)$ be the optimal independent set restricted to a set of elements $S$, so that $\OPT = \OPT(E)$.

Each element $e \in E$ independently arrives at time $T_e \sim \Unif[0, 1]$.
An online algorithm has access to an independence oracle and a comparison oracle over the arrived elements.
The former answers whether a subset of arrived elements is independent, while the latter provides pairwise comparisons of arrived elements according to $\succ$.
When an element arrives, the algorithm immediately and irrevocably decides whether to accept it.
The set of accepted elements must be independent at all times.
The algorithm is $\Gamma$-probability-competitive if for every element $e \in \OPT$, it accepts $e$ with probability at least $\frac{1}{\Gamma}$.

\paragraph{Matroid Basics.}
We provide minimal background on matroids that will be used in this paper.
The \emph{rank} of a set $S \subseteq E$ is the size of the largest independent subset of $S$.
An element $e$ is \emph{spanned} by a set of elements $S \subseteq E$ if $S$ and $S+e$ have the same rank, where we write $S + e$ and $S - e$ for $S \cup \{e\}$ and $S \setminus \{e\}$ throughout the paper.
The optimal independent set $\OPT(S)$ can be computed greedily by iterating through the elements of $S$ in decreasing order and adding each element if it is not spanned by the elements before it.
The \emph{closure} $\closure(S)$ of a subset of elements $S \subseteq E$ is the set of elements spanned by $S$.
Closure is monotone, i.e., $S \subseteq T$ implies $\closure(S) \subseteq \closure(T)$.

\medskip

We further need the following standard properties of matroids.

\begin{lemma}[Inheritance of Optimality~\cite{Oxley-2011}]
    \label{lem:opt-inheritance}
    If $e \in S \subseteq T$ and $e \in \OPT(T)$, then $e \in \OPT(S)$.
\end{lemma}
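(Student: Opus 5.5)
We will use the greedy characterization of $\OPT(\cdot)$ recalled above: an element $e \in S$ belongs to $\OPT(S)$ if and only if $e$ is not spanned by the set $S_{\succ e} := \{f \in S : f \succ e\}$ of elements of $S$ bigger than $e$. Because the greedy algorithm scans elements in decreasing order, when it reaches $e$ the elements already considered are exactly those of $S_{\succ e}$, and the current greedy set $G$ is a maximal independent subset of $S_{\succ e}$. Hence $\closure(G) = \closure(S_{\succ e})$, so $e$ is added precisely when $e \notin \closure(S_{\succ e})$. We will first record this equivalence, using only that a maximal independent subset of a set has the same rank as the set and therefore the same closure.

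With this characterization the lemma reduces to monotonicity of closure. Suppose $e \in S \subseteq T$ and $e \in \OPT(T)$. Then $e \notin \closure(T_{\succ e})$. Since $S \subseteq T$, we have $S_{\succ e} = \{f \in S : f \succ e\} \subseteq \{f \in T : f \succ e\} = T_{\succ e}$. Monotonicity of closure gives $\closure(S_{\succ e}) \subseteq \closure(T_{\succ e})$. Therefore $e \notin \closure(S_{\succ e})$, and by the characterization applied to $S$ (with $e \in S$), $e \in \OPT(S)$.

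The only step needing care is the equivalence ``added by greedy iff not spanned by all bigger elements of the set.'' This matters because the greedy description in the preliminaries phrases spanning relative to the elements before $e$, which is the same as $S_{\succ e}$ given the scan order. The point to check is that the greedy set $G \subseteq S_{\succ e}$ spans all of $S_{\succ e}$. We will prove this by induction along the scan: each rejected element is spanned by the greedy set at the time it is rejected, hence by $G$ by monotonicity of closure. Then $\closure(G) \supseteq S_{\succ e}$, and so $\closure(G) = \closure(S_{\succ e})$ by monotonicity and idempotence of closure, or equivalently since $r(G) = r(S_{\succ e})$. Everything else is immediate.
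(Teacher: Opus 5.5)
Your proof is correct. The greedy set at the moment $e$ is scanned spans every bigger element of $S$, which gives $e \in \OPT(S) \iff e \notin \closure(\{f \in S : f \succ e\})$, and the lemma then follows from monotonicity of closure. The paper gives no proof of its own and cites the fact as standard, but your argument is the standard one, and the paper uses the same greedy-characterization-plus-monotonicity reasoning implicitly elsewhere, e.g., in the proof of \Cref{lem:survival-competitive}.
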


\begin{lemma}[Closure Exchange Property~\cite{Oxley-2011}]
    \label{lem:exchange-property}
    For any $S$ and any $e, f$, we have
\[
        f \in \closure(S+e) \setminus \closure(S)
        \quad\Longleftrightarrow\quad
        e \in \closure(S+f) \setminus \closure(S).
    \]
\end{lemma}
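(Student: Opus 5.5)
We prove \Cref{lem:exchange-property} directly from the rank function, using submodularity and the unit-increase property $r(S) \le r(S+x) \le r(S)+1$. By symmetry of the statement in $e$ and $f$, it suffices to prove the forward implication. The reverse implication then follows by swapping the roles of $e$ and $f$.

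Assume $f \in \closure(S+e) \setminus \closure(S)$, which in rank terms reads $r(S+e+f) = r(S+e)$ and $r(S+f) = r(S)+1$.

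\textbf{Step 1 ($e \notin \closure(S)$).} Suppose instead $r(S+e) = r(S)$. Then $r(S+e+f) = r(S+e) = r(S)$. Since $S+f \subseteq S+e+f$, monotonicity gives $r(S+f) \le r(S)$, contradicting $r(S+f) = r(S)+1$. Hence $r(S+e) = r(S)+1$, so $e \notin \closure(S) $.

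\textbf{Step 2 ($e \in \closure(S+f)$).} We compute
\[
r(S+e+f) = r(S+e) = r(S)+1 = r(S+f).
\]
So adding $e$ to $S+f$ does not increase the rank, i.e., $e \in \closure(S+f)$.

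Together the two steps give $e \in \closure(S+f) \setminus \closure(S)$.

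Note that submodularity is not actually needed beyond monotonicity and the unit-increase property, so the argument is elementary. If $e \in S$ or $f \in S$, both sides are false, because elements of $S$ lie in $\closure(S)$; this is consistent with the argument. The degenerate case $e = f$ is trivial. The only step requiring care is Step 1, namely verifying $e \notin \closure(S)$, which is where the hypothesis $f \notin \closure(S)$ is used.
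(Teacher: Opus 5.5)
Your proof is correct: the hypothesis becomes $r(S+e+f)=r(S+e)$ and $r(S+f)=r(S)+1$, Step~1 excludes $r(S+e)=r(S)$ by monotonicity, and Step~2 then gives $r(S+f+e)=r(S+f)$, i.e., $e \in \closure(S+f)\setminus\closure(S)$. The paper gives no proof of its own and only cites Oxley, and your argument is the standard derivation of the closure exchange axiom from the rank function found there. Both ingredients, monotonicity and $r(S+x)\le r(S)+1$, follow directly from the paper's definition of rank, since removing $x$ from an independent subset of $S+x$ leaves an independent subset of $S$.
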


\begin{lemma}[Irrelevance of Unspanned Elements]
    \label{lem:unspanned}
    If $e \notin \closure(T)$ and $X + f \subseteq T$, then $f \in \closure(X+e)$ if and only if $f \in \closure(X)$.
\end{lemma}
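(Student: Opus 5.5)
We need to prove the statement of Lemma~\ref{lem:unspanned}: if $e \notin \closure(T)$ and $X + f \subseteq T$, then $f \in \closure(X+e)$ if and only if $f \in \closure(X)$. The plan is to prove the two directions separately. The reverse direction is immediate from monotonicity of closure, since $X \subseteq X+e$. For the forward direction, I will argue by contradiction and reduce everything to the Closure Exchange Property (Lemma~\ref{lem:exchange-property}).

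For the forward direction, suppose $f \in \closure(X+e)$ but $f \notin \closure(X)$. Then $f \in \closure(X+e) \setminus \closure(X)$. Applying Lemma~\ref{lem:exchange-property} with $S = X$ and the roles of $e, f$ as stated, we get $e \in \closure(X+f) \setminus \closure(X)$, and in particular $e \in \closure(X+f)$. Since $X + f \subseteq T$, monotonicity of closure gives $e \in \closure(X+f) \subseteq \closure(T)$. This contradicts the hypothesis $e \notin \closure(T)$, so $f \in \closure(X)$.

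Two small edge cases need a check. If $f \in X$, then $f \in \closure(X)$ trivially, so both sides hold. If $f = e$, then $e \in X + f \subseteq T \subseteq \closure(T)$, which contradicts the hypothesis, so this case cannot occur. Neither case affects the exchange argument, which applies to arbitrary $e, f$ as stated. There is no real obstacle here: the only substantive step is recognizing that the exchange lemma converts ``$e$ makes a difference for spanning $f$'' into ``$f$ makes a difference for spanning $e$'', after which monotonicity closes the argument.
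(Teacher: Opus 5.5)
Your proof is correct and follows essentially the same route as the paper: monotonicity gives the ``if'' direction, and for ``only if'' you assume $f \in \closure(X+e)\setminus\closure(X)$, apply the Closure Exchange Property to get $e \in \closure(X+f)$, and use monotonicity with $X+f\subseteq T$ to contradict $e\notin\closure(T)$. Your edge-case remarks on $f\in X$ and $f=e$ are harmless extras that the paper omits.
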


\begin{proof}
    The ``if'' direction follows from monotonicity.
    For the ``only if'' direction, suppose for contradiction that $f \in \closure(X+e) \setminus \closure(X)$.
By the exchange property (\Cref{lem:exchange-property}) and monotonicity, $e \in \closure(X+f) \subseteq \closure(T)$, a contradiction.
\end{proof}

\section{Auxiliary Survival Processes}
\label{sec:survival}

In the rest of the paper, let $p \in [\frac{1}{\e}, 1]$ be a fixed threshold.
The optimal choice will be $p = \frac{1}{\e}$, but the analysis works for any $p$ in this range.

\begin{definition}[Filtration and Spanning Time]
    Fix any $t \in [p, 1]$.
    A \emph{filtration} $F = (F_u)_{u \in [p, t]}$ is a nondecreasing right-continuous family of subsets of $E$.
    For any element $e$, its \emph{spanning time} with respect to filtration $F$ is
\[
        \sigma_{e,F} = \inf \bigl\{ u \in [p, t] : e \in \closure(F_u) \bigr\},
    \]
    with the convention that the infimum of the empty set equals $t$.
\end{definition}

We write $\varnothing$ and $e$ for the constant filtrations $F_u = \varnothing$ and $F_u = \{e\}$.
We omit the subscript $F$ from $\sigma_{e,F}$ when it is clear from context.

For example, let $p = \frac{1}{\e}$, $t = 1$, and define each element $e$'s spanning time with respect to a filtration where $F_u$ is the set of elements bigger than $e$ arrived by time $u$.
Dynkin's algorithm for the $1$-uniform matroid selects the first element arriving between time $p$ and its spanning time.

\begin{definition}[Survival Process]
    \label{def:survival-process}
    For any $t \in [p, 1]$ and any subset of elements $S$, let $\mathcal{D}_{S,t}$ be the distribution of the set of survivors $A \subseteq S$ in the following process.
    \begin{enumerate}
        \item Independently draw an arrival time $T_e \sim \Unif[0, t]$ for each element $e \in S$.
        \item For any $u \in [p, t]$, let the set of elements arrived by time $u$ be $S_u = \bigl\{ e \in S : T_e \le u \bigr\}$.
        \item For any $e \in S$, let the subset of bigger elements arrived by time $u$ be $S_{e,u} = \bigl\{ f \in S_u : f \succ e \bigr\}$.
        \item  For any $e \in S$, let $\sigma_e$ be its spanning time with respect to the filtration $(S_{e,u})_{u \in [p, t]}$.
\item Independently let each $e \in S_p$ survive with probability $\theta_e = \ln \frac{\sigma_e}{p}$.
    \end{enumerate}
\end{definition}

Letting $p \in [\frac{1}{\e}, 1]$ ensures that the survival probability $\theta_e = \ln \frac{\sigma_e}{p}$ is at most $1$.
Almost surely, every surviving element $e$ arrives by time $p$, i.e., $e \in S_p$, and belongs to $\OPT(S_p)$, as otherwise, by the greedy characterization of $\OPT$, $\sigma_e = p$ and $\theta_e = 0$.
Moreover, the survival probability is particularly simple for elements in $\OPT(S)$.

\begin{lemma}
    \label{lem:survival-competitive}
    For any $t \in [p, 1]$, any subset of elements $S$, and any $e \in \OPT(S)$, we have
    \[
        \Pr_{A \sim \mathcal{D}_{S,t}} \bigl[ e \in A \bigr] = \frac{p}{t} \ln \frac{t}{p}.
    \]
\end{lemma}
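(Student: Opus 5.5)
Since $e \in \OPT(S)$, the greedy characterization tells us that $e$ is not spanned by the set of elements of $S$ bigger than $e$, i.e., $e \notin \closure(\{f \in S : f \succ e\})$. Every set $S_{e,u}$ is a subset of this set. By monotonicity of closure, $e \notin \closure(S_{e,u})$ for every $u \in [p,t]$. So the set in the definition of the spanning time is empty, and by convention $\sigma_e = t$ deterministically. In particular, $\sigma_e$ does not depend on the random arrival times.

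Next, by the last step of \Cref{def:survival-process}, $e$ can survive only if $e \in S_p$, i.e., $T_e \le p$. Since $T_e \sim \Unif[0,t]$, this event has probability $\frac{p}{t}$. Conditional on $T_e \le p$, the element survives with probability $\theta_e = \ln\frac{\sigma_e}{p} = \ln\frac{t}{p}$, independently of everything else. This value lies in $[0,1]$ because $p \le t \le 1 \le \e p$. Multiplying the two factors gives
\[
    \Pr_{A \sim \mathcal{D}_{S,t}}\bigl[e \in A\bigr] = \frac{p}{t}\ln\frac{t}{p}.
\]

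There is no real obstacle here. The only point requiring care is to note that $\sigma_e = t$ holds for every realization, not just almost surely. This follows because the bigger elements of $S$ never span $e$, no matter which of them have arrived. As a result, no conditioning issues arise between the event $T_e \le p$ and the value of $\sigma_e$.
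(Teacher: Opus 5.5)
Your proof is correct and follows the paper's argument: by monotonicity of closure the bigger elements of $S$ never span $e$, so $\sigma_e = t$ deterministically, and the survival probability is $\Pr[T_e \le p] = \frac{p}{t}$ times $\theta_e = \ln\frac{t}{p}$. The paper additionally invokes inheritance of optimality to note $e \in \OPT(S_p)$; as you implicitly observe, this is not needed, since step~5 of the survival process only requires $e \in S_p$.
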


\begin{proof}
    Since $e \in \OPT(S)$, it is not spanned by the bigger elements of $S$.
    By monotonicity, $\sigma_e = t$.
    Hence, the probability that $e$ survives equals $\frac{p}{t}$, the probability that $e$ arrives by time $p$ and thus is in $\OPT(S_p)$ by inheritance of optimality (\Cref{lem:opt-inheritance}), multiplied by $\theta_e = \ln \frac{t}{p}$, the conditional probability that $e$ survives.
\end{proof}

To analyze $\mathcal{D}_{S,t}$, we will consider the biggest element $f \in S$ and condition on its arrival time.
If $f$ arrives at time $u$, it is in the spanning filtration of all other elements throughout $[\max\{p, u\},t]$.
This motivates a survival process with a fixed element added to the filtration after a given time.
Inductively, we need to consider survival processes with a background filtration.
We next introduce this richer family of survival processes and establish their relationship.

\begin{definition}[Survival Process with Background Filtration]
    For any $t \in [p, 1]$, any subset of elements $S$, and any background filtration $F = (F_u)_{u \in [p, t]}$ disjoint from $S$, let $\mathcal{D}_{S,t,F}$ be the distribution of the set of survivors $A \subseteq S$ in the process consisting of the first three steps in \Cref{def:survival-process} and the next two steps.
    \begin{enumerate}[start=4]
        \item For each $e \in S$, let $\sigma_e$ be its spanning time with respect to $(F_u \cup S_{e, u})_{u \in [p, t]}$.
        \item Independently let each $e \in S_p$ survive with probability $\theta_e = \ln \frac{\sigma_e}{p}$.
    \end{enumerate}
\end{definition}

The original survival process is the special case with an empty filtration.
That is, we have
\[
    \mathcal{D}_{S,t} = \mathcal{D}_{S,t,\varnothing}, \qquad \forall t \in [p, 1].
\]

Given any filtration $F = (F_u)_{u \in [p, t]}$, we write $F + e = (F_u \cup \{e\})_{u \in [p, t]}$ for the filtration with element $e$ added.
The next lemma relates the survival processes with filtrations $F$ and $F+e$.

\begin{lemma}
    \label{lem:distribution-change}
    For any $t \in [p, 1]$, any subset of elements $S$, any filtration $F$ disjoint from $S$, any $e \notin S$, and any subset $A \subseteq S$, we have
\[
        \mathcal{D}_{S,t,F}(A) \ge \frac{p}{\sigma_{e,F}} \mathcal{D}_{S,t,F+e}(A).
    \]
\end{lemma}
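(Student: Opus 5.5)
The plan is to prove the inequality by induction on $|S|$. Following the hint before the definition of $\mathcal{D}_{S,t,F}$, we condition on the biggest element $f \in S$ and its arrival time $T_f$; a pointwise coupling of the two processes with shared arrival times does not suffice. The base case $S = \varnothing$ is immediate: both sides concern $A = \varnothing$, with $\mathcal{D}_{\varnothing,t,F}(\varnothing) = 1 \ge \frac{p}{\sigma_{e,F}}$ since $\sigma_{e,F} \ge p$. For the inductive step, write $G^{(u)}$ for the filtration equal to $F$ before time $\max\{p,u\}$ and to $F + f$ afterwards. Since $f$ is biggest, its own spanning time is determined by the background filtration alone: it is $\sigma_{f,F}$ in one process and $\sigma_{f,F+e}$ in the other. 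Conditional on $T_f = u$, the survivors among $S - f$ are distributed as $\mathcal{D}_{S-f,t,G^{(u)}}$, independently of $f$'s survival coin. This yields the recursion
\[
\mathcal{D}_{S,t,F}(A) = \frac{1}{t}\int_0^t \Pr[\text{$f$'s status matches $A$} \mid T_f = u]\, \mathcal{D}_{S-f,t,G^{(u)}}(A - f)\, du ,
\]
and the same recursion holds with $F+e$ in place of $F$.

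First I would extend the induction hypothesis to time-varying filtrations, replacing ``$+e$'' by adding $e$ to every $F_u$. Applying it to each $G^{(u)}$ gives
\[
\mathcal{D}_{S-f,t,G^{(u)}}(A-f) \ge \frac{p}{\sigma_{e,G^{(u)}}}\,\mathcal{D}_{S-f,t,G^{(u)}+e}(A-f).
\]
We then compare term by term. Whenever $\sigma_{e,G^{(u)}} = \sigma_{e,F}$, this is exactly the factor we want. When $f \notin A$ and $T_f > p$, $f$ plays no survival role, so only this comparison is needed. When $f \in A$ (so $T_f \le p$), we additionally use $\theta_f^{F} = \ln\frac{\sigma_{f,F}}{p} \ge \ln\frac{\sigma_{f,F+e}}{p} = \theta_f^{F+e}$, which holds by monotonicity of closure.

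The main obstacle is the case $\sigma_{e,F+f} < \sigma_{e,F}$, where adding $f$ makes $e$ spanned earlier and the induction factor $\frac{p}{\sigma_{e,F+f}}$ is too large. Here I would use the closure exchange property (\Cref{lem:exchange-property}). For $u \in [\sigma_{e,F+f}, \sigma_{e,F})$ we have $e \in \closure(F_u + f)\setminus\closure(F_u)$, hence $f \in \closure(F_u+e)$. Therefore $\sigma_{f,F+e} \le \sigma_{e,F+f}$, and $f$'s survival probability in the $F+e$ process is correspondingly smaller. The compensation should come from this: in the term with $f \in A$, we would need
\[
\ln\frac{\sigma_{f,F}}{p}\cdot\frac{p}{\sigma_{e,F+f}} \ \ge\ \frac{p}{\sigma_{e,F}}\ln\frac{\sigma_{f,F+e}}{p}.
\]
This follows if $\sigma_{f,F+e} \le \sigma_{e,F+f}$ and, by a symmetric exchange argument, $\sigma_{e,F+f} \le \sigma_{f,F}$, because $x \mapsto \frac{1}{x}\ln\frac{x}{p}$ is increasing for $x \le \e p$, a range that contains all of $[p,1]$. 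The terms with $f \notin A$ and $T_f \le p$ carry the factor $1 - \theta_f$, which goes the wrong way. For these I expect to have to combine them with the $T_f \in (p, \sigma)$ terms and integrate in $u$ rather than compare pointwise. The identity $\int_p^{\sigma}\frac{du}{t} = \frac{\sigma - p}{t}$, together with $\frac{p}{t}\bigl(1 - \ln\frac{\sigma}{p}\bigr) + \frac{\sigma - p}{t} \ge \frac{\sigma}{t}\cdot\frac{p}{\sigma}$ (equivalent to $\ln\frac{\sigma}{p} \le \frac{\sigma}{p} - 1$), is the first inequality I would try for absorbing the discrepancy. Getting this bookkeeping across cases right is where I expect the real work.
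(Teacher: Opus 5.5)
Your skeleton matches the paper's: induction on $|S|$, conditioning on the biggest element $f$ and its arrival time, and the same case split, since $\sigma_{e,F+f}<\sigma_{e,F}$ is equivalent to $\sigma_{f,F+e}<\sigma_{f,F}$ by the exchange property. Your handling of the terms with $f\in A$ and with $f$ late is fine. But the proposal stops exactly where the real work begins, and the diagnosis of the obstacle is off. Since $G^{(u)}\supseteq F$ pointwise, $\sigma_{e,G^{(u)}}\le\sigma_{e,F}$ always holds. So the inductive factor $p/\sigma_{e,G^{(u)}}$ is never ``too large''; a larger factor only helps. In particular, the $f\in A$ term follows from monotonicity alone, and your exchange-based chain is unnecessary there. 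The actual obstacle is the term where $f$ arrives by $p$ and does not survive. Write $\sigma_0=\sigma_{f,F}=\sigma_{e,F}$ and $\sigma_1=\sigma_{f,F+e}=\sigma_{e,F+f}$. Even with the favorable factor $p/\sigma_1$, you get $\frac{p}{t}(1-\ln\frac{\sigma_0}{p})\frac{p}{\sigma_1}$ against the required $\frac{p}{\sigma_0}\frac{p}{t}(1-\ln\frac{\sigma_1}{p})$. This is a strict deficit, because $s\mapsto s(1-\ln\frac{s}{p})$ is decreasing on $[p,1]$.

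To cover this deficit with late-$f$ mass, you must compare $\mathcal{D}_{S-f,t,F(u)}(A)$ (late $f$, $F$ side) with $\mathcal{D}_{S-f,t,F(p)+e}(A)$ (early $f$, $F+e$ side). These are different distributions, and no monotonicity relates them for a fixed $A$. So a scalar coefficient inequality like your $\frac{p}{t}(1-\ln\frac{\sigma}{p})+\frac{\sigma-p}{t}\ge\frac{\sigma}{t}\cdot\frac{p}{\sigma}$ cannot be applied.

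The missing idea is a second application of the induction hypothesis, with $f$ (not $e$) as the added element and $F(u)+e$ as the filtration. Since $f$ already lies in $F(u)+e$ from time $u$, you have $\sigma_{f,F(u)+e}=\min\{u,\sigma_1\}$ and $(F(u)+e)+f=F(p)+e$. This gives $\mathcal{D}_{S-f,t,F(u)}(A)\ge\frac{p}{\sigma_{e,F(u)}}\frac{p}{\min\{u,\sigma_1\}}\mathcal{D}_{S-f,t,F(p)+e}(A)$.

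Moreover, only part of the late mass is free. A fraction $\sigma_{e,F(u)}/\sigma_0$ of each late integrand is consumed in matching the corresponding late integrand of the $F+e$ process at factor $p/\sigma_0$. Only the remaining $1-\sigma_{e,F(u)}/\sigma_0$ can be transported to the early configuration. Now plug in $\sigma_{e,F(u)}=\min\{\sigma_0,\max\{\sigma_1,u\}\}$, which needs the exact identities of \Cref{lem:spanning-time-change}, not just the inequalities you derive. The total coefficient of $\mathcal{D}_{S-f,t,F(p)+e}(A)$ then comes out equal to $\frac{p}{\sigma_0}\frac{p}{t}(1-\ln\frac{\sigma_1}{p})$ exactly. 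Because the bookkeeping is tight, a cruder absorption, such as counting the whole late interval $(p,\sigma)$ at full weight, cannot be made to work.
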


This is the main lemma that enables online simulation.
We first show two simpler facts about how the spanning time changes as the filtration gets bigger.

\begin{lemma}
    \label{lem:spanning-time-monotone}
    Suppose $F = (F_u)_{u \in [p, t]}$ and $F^+ = (F^+_u)_{u \in [p, t]}$ satisfy $F_u \subseteq F^+_u$ for $u \in [p, t]$.
    Then, for any element $e$, we have $\sigma_{e,F} \ge \sigma_{e,F^+}$.
\end{lemma}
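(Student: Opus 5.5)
The plan is to argue directly from the definition of the spanning time as an infimum, comparing the two index sets over which the infima are taken. Set
\[
    U = \bigl\{ u \in [p,t] : e \in \closure(F_u) \bigr\}, \qquad U^+ = \bigl\{ u \in [p,t] : e \in \closure(F^+_u) \bigr\},
\]
so that $\sigma_{e,F} = \inf U$ and $\sigma_{e,F^+} = \inf U^+$, each with the convention that the infimum of the empty set equals $t$.

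The key step is the inclusion $U \subseteq U^+$. Take any $u \in U$. Since $F_u \subseteq F^+_u$, monotonicity of closure gives $\closure(F_u) \subseteq \closure(F^+_u)$. Hence $e \in \closure(F^+_u)$, and so $u \in U^+$.

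It remains to pass from the inclusion to the infima, taking care with the empty-set convention. There are two cases.
\begin{itemize}
    \item If $U$ is nonempty, then $U^+$ is nonempty as well, and the infimum over the larger set can only be smaller: $\inf U^+ \le \inf U$.
    \item If $U$ is empty, then $\sigma_{e,F} = t$. Every element of $U^+$ lies in $[p,t]$, so $\sigma_{e,F^+} \le t$ regardless of whether $U^+$ is empty.
\end{itemize}
In both cases $\sigma_{e,F} \ge \sigma_{e,F^+}$, which is the claim.

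There is no genuine obstacle here. The only point needing care is that the convention for the empty infimum is consistent with the ordering, and this holds because $t$ is the maximum of the range $[p,t]$. Right-continuity and nondecreasingness of the filtrations are not needed for this argument.
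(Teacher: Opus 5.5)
Your proof is correct and takes the same route as the paper, which simply observes that, by monotonicity of closure, a filtration of bigger sets can only span $e$ earlier. Your case split on whether the set is empty, using the convention that the empty infimum equals $t$, spells out details the paper leaves implicit.
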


\begin{proof}
    By the monotonicity of closure, a filtration of bigger sets can only span $e$ earlier.
\end{proof}

\begin{lemma}
    \label{lem:spanning-time-change}
    Let $e, f$ be distinct elements, and $F = (F_u)_{u \in [p, t]}$ be a filtration.
    Suppose $\sigma_{f,F} > \sigma_{f,F+e}$.
    Then, we have
    \[
        \sigma_{e,F} = \sigma_{f,F},
        \qquad
        \sigma_{e,F+f} = \sigma_{f,F+e}.
    \]
    Moreover, for a filtration $F(u)$ with $f$ added to $F$ from time $u \in [p, t]$ onward, we have
    \[
        \sigma_{e,F(u)} = \min \bigl\{ \sigma_{e,F}, \max \bigl\{ \sigma_{f,F+e}, u\bigr\} \bigr\}.
    \]
\end{lemma}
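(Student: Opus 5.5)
We argue pointwise in the time variable. Write $a = \sigma_{f,F+e} < b = \sigma_{f,F}$. Since filtrations are nondecreasing and right-continuous, and closure is monotone, the set of times $v$ at which an element is spanned by a filtration is an up-interval $[\sigma, t]$ (or empty, with $\sigma = t$). Thus each spanning-time identity reduces to checking, for each $v$, whether a given element lies in the closure of the relevant set. The key tool is the closure exchange property (\Cref{lem:exchange-property}), applied with $S = F_v$.

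\textbf{The first two identities.} For $v \in [a, b)$ we have $f \in \closure(F_v + e) \setminus \closure(F_v)$. By \Cref{lem:exchange-property}, $e \in \closure(F_v+f) \setminus \closure(F_v)$.

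For $\sigma_{e,F} = \sigma_{f,F}$: the above gives $e \notin \closure(F_v)$ for $v < b$ near $b$, hence for all $v < b$ by monotonicity, so $\sigma_{e,F} \ge b$. Conversely, at $v = b$ (or for $v$ slightly above $b$, using right-continuity), $f \in \closure(F_v)$. I claim $e \in \closure(F_v)$. Suppose not. Fix $w \in [a,b)$; then $e \in \closure(F_w + f)$, and $F_w + f \subseteq F_v + f$, so $e \in \closure(F_v + f) = \closure(F_v)$ because $f \in \closure(F_v)$. This is a contradiction, so $\sigma_{e,F} \le b$. The boundary case $b = t$ with the infimum convention needs a separate remark: if $f$ is never spanned by $F$, then neither is $e$, since $e \in \closure(F_v)$ would force $f \in \closure(F_v + e) = \closure(F_v)$ for $v \ge a$.

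For $\sigma_{e,F+f} = \sigma_{f,F+e}$:
\begin{itemize}
\item For $v \ge a$: if $v < b$, exchange gives $e \in \closure(F_v + f)$; if $v \ge b$, then $e \in \closure(F_v)$.
\item For $v < a$: suppose $e \in \closure(F_v + f)$. Since $v < b$, we have $e \notin \closure(F_v)$, so exchange gives $f \in \closure(F_v + e)$, contradicting $v < a$.
\end{itemize}

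\textbf{The third identity.} Here $F(u)_v = F_v$ for $v < u$ and $F(u)_v = F_v + f$ for $v \ge u$. Then $e$ is spanned at time $v$ iff either $e \in \closure(F_v)$, i.e.\ $v \ge \sigma_{e,F}$, or $v \ge u$ and $e \in \closure(F_v + f)$, i.e.\ $v \ge \max\{u, \sigma_{e,F+f}\}$. The spanning set is therefore $[\sigma_{e,F}, t] \cup [\max\{u, \sigma_{e,F+f}\}, t]$, whose infimum is the stated minimum after substituting $\sigma_{e,F+f} = \sigma_{f,F+e}$.

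\textbf{Main obstacle.} The delicate part is the careful handling of infima versus attained times, i.e.\ the right-continuity argument and the convention $\inf \emptyset = t$. In particular, the inequality $\sigma_{e,F} \le \sigma_{f,F}$ uses exchange at an earlier time $w$ together with monotonicity, which requires $a < b$ strictly. I would first verify these interval and endpoint claims before the exchange arguments.
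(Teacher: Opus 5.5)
Your proof is correct and follows essentially the same route as the paper: apply the exchange property pointwise at each time $v$ to match the times where $e\in\closure(F_v+f)\setminus\closure(F_v)$ with those where $f\in\closure(F_v+e)\setminus\closure(F_v)$, then obtain the third identity by splitting on $v<u$ versus $v\ge u$. The paper compresses your separate endpoint checks into one observation: these two time sets are the half-open intervals $[\sigma_{e,F+f},\sigma_{e,F})$ and $[\sigma_{f,F+e},\sigma_{f,F})$, and since the second is nonempty, their equality gives both identities at once.
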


\begin{proof}
    At any time $u \in [p, t)$, $e \in \closure(F_u+f) \setminus \closure(F_u)$ if and only if $f \in \closure(F_u+e) \setminus \closure(F_u)$ by the exchange property (\Cref{lem:exchange-property}).
    The time interval for the former is $[\sigma_{e,F+f}, \sigma_{e,F})$ and that for the latter is $[\sigma_{f,F+e}, \sigma_{f,F})$.
    Hence, we have $\sigma_{e,F} = \sigma_{f,F}$ and $\sigma_{e,F+f} = \sigma_{f,F+e}$.
    This covers the expression of $\sigma_{e,F(u)}$ for $u = p$ and $u = t$.

    Next, consider adding $f$ to $F$ at some time $u \in (p, t)$.
    For $v < u$, we have $F_v(u) = F_v$, which spans $e$ if and only if $v \ge \sigma_{e,F}$.
    For $v \ge u$, we have $F_v(u) = F_v + f$, which spans $e$ if and only if $v \ge \sigma_{e,F+f}$.
    Putting together gives the claimed expression for $\sigma_{e,F(u)}$.
\end{proof}

\begin{proof}[Proof of \Cref{lem:distribution-change}]
    Throughout this proof, we omit the fixed $t$ from the subscripts of $\mathcal{D}$, e.g., writing $\mathcal{D}_{S,G}$ for $\mathcal{D}_{S,t,G}$.
    We prove by induction on the size of $S$.
    The base case of $S = \varnothing$ is trivial.

    Next, consider a nonempty $S$ and suppose the lemma holds for all smaller subsets.
    Let $f$ be the biggest element in $S$.
    Hence, its spanning time is $\sigma_{f,F}$ under $F$ and $\sigma_{f,F+e}$ under $F+e$, regardless of the other arrivals.
    Let $F(u)$ be the filtration obtained from adding element $f$ to $F$ from time $u$ onward.
    Note that $F(p) = F + f$.
    By \Cref{lem:spanning-time-monotone}, we have $\sigma_{f,F} \ge \sigma_{f,F+e}$.

    \paragraph{Case 1: $\sigma_{f,F} = \sigma_{f,F+e}$.}
Denote this common value as $\sigma_f$.
    By \Cref{lem:spanning-time-change} with $e$ and $f$ interchanged, $\sigma_{e,F} > \sigma_{e,F+f}$ would imply $\sigma_{f,F} > \sigma_{f,F+e}$.
    Hence, by monotonicity (\Cref{lem:spanning-time-monotone}), we have $\sigma_{e,F} = \sigma_{e,F+f}$.
Since $F \subseteq F(u) \subseteq F+f$, monotonicity also gives $\sigma_{e,F(u)} = \sigma_{e,F}$ for every $u \in [p, t]$.
    Consider the probabilities of realizing $A \subseteq S-f$.
    We have
\[
        \mathcal{D}_{S,F}(A)
        = \frac{p}{t} \Bigl(1 - \ln\frac{\sigma_f}{p}\Bigr) \mathcal{D}_{S-f,F(p)}(A) + \frac{1}{t} \int_p^t \mathcal{D}_{S-f,F(u)}(A) du.
    \]

    Here, $\frac{p}{t}$ is the probability that $f$ arrives by time $p$, in which case $f$ must not survive to realize $A$, captured by the factor $1 - \ln\frac{\sigma_f}{p}$.
    The density of $f$'s arrival time $u$ is $\frac{1}{t}$.
    Moreover, since $f$ is the biggest element in $S$, it is in the filtration of other elements once it arrives.
    The set of surviving elements other than $f$ follows $\mathcal{D}_{S-f,F(\max\{p,u\})}$ conditional on $f$'s arrival time $u$.

    The same argument gives
\[
        \mathcal{D}_{S,F+e}(A)
        = \frac{p}{t}  \Bigl(1 - \ln\frac{\sigma_f}{p}\Bigr) \mathcal{D}_{S-f,F(p)+e}(A) + \frac{1}{t} \int_p^t \mathcal{D}_{S-f,F(u)+e}(A) du.
    \]

    By the induction hypothesis, we have
\begin{equation}
        \label{eqn:case-1}
        \mathcal{D}_{S-f,F(u)}(A)
        \ge \frac{p}{\sigma_{e,F(u)}} \mathcal{D}_{S-f,F(u)+e}(A) = \frac{p}{\sigma_{e,F}} \mathcal{D}_{S-f,F(u)+e}(A).
    \end{equation}

    Combining the above proves the claimed inequality.

    A similar argument goes through for the probabilities of realizing $A \cup \{f\}$ for any $A \subseteq S-f$.

    \paragraph{Case 2: $\sigma_{f,F} > \sigma_{f,F+e}$.}
    Let $\sigma_0 = \sigma_{f,F} = \sigma_{e,F}$ and $\sigma_1 = \sigma_{f,F+e} = \sigma_{e,F+f}$ by \Cref{lem:spanning-time-change}.
    First, consider the probabilities of realizing $A \cup \{f\}$ where $A \subseteq S-f$.
    They are
\begin{align*}
        \mathcal{D}_{S,F}(A \cup \{f\})
         &
        = \frac{p}{t} \ln \frac{\sigma_0}{p} \mathcal{D}_{S-f,F+f}(A),
        \qquad
        \mathcal{D}_{S,F+e}(A \cup \{f\})
        = \frac{p}{t} \ln \frac{\sigma_1}{p} \mathcal{D}_{S-f,F+f+e}(A).
    \end{align*}

    Omitting the $\frac{1}{t}$ density term, by $\sigma_1 \le \sigma_0$ (\Cref{lem:spanning-time-monotone}), the coefficient for $\mathcal{D}_{S-f,F+f}$ satisfies
\begin{equation}
        \label{eqn:case-2-a-1}
        p \ln \frac{\sigma_0}{p} \ge p \,\frac{\sigma_1}{\sigma_0} \ln \frac{\sigma_1}{p}.
    \end{equation}

    By the induction hypothesis, we have
\begin{equation}
        \label{eqn:case-2-a-2}
        \mathcal{D}_{S-f,F+f}(A)
        \ge \frac{p}{\sigma_1} \mathcal{D}_{S-f,F+f+e}(A).
    \end{equation}

    Combining these two inequalities proves the claimed inequality.
    Readers may find it more natural to first apply the induction hypothesis and then compare coefficients.
    However, the order above corresponds better to a later algorithmic version of the lemma (\Cref{lem:constructive-domination}).
    The same remark applies to the rest of the proof.

    \bigskip

    Next, consider the probabilities of realizing $A \subseteq S-f$.
    They are
\begin{align*}
        \mathcal{D}_{S,F}(A)
         &
        = \frac{p}{t} \Bigl(1 - \ln\frac{\sigma_0}{p}\Bigr) \mathcal{D}_{S-f,F(p)}(A) + \frac{1}{t} \int_p^t \mathcal{D}_{S-f,F(u)}(A) du, \\
        \mathcal{D}_{S,F+e}(A)
         &
        = \frac{p}{t} \Bigl(1 - \ln\frac{\sigma_1}{p}\Bigr) \mathcal{D}_{S-f,F(p)+e}(A) + \frac{1}{t} \int_p^t \mathcal{D}_{S-f,F(u)+e}(A) du.
    \end{align*}

    For the first term, which corresponds to $f$'s arriving by time $p$, we have
\begin{equation}
        \label{eqn:case-2-b-1}
        \mathcal{D}_{S-f,F(p)}(A)
        \ge \frac{p}{\sigma_{e,F(p)}} \mathcal{D}_{S-f,F(p)+e}(A)
        = \frac{p}{\sigma_1} \mathcal{D}_{S-f,F(p)+e}(A).
    \end{equation}

    For each integrand $\mathcal{D}_{S-f,F(u)}(A)$ in the second term, which corresponds to $f$ arriving at time $u \in (p,t]$, \Cref{lem:spanning-time-change} implies $\sigma_{e,F(u)} \le \sigma_0$.
    We split the coefficient $1$ into two nonnegative parts $\frac{\sigma_{e,F(u)}}{\sigma_0}$ and $1-\frac{\sigma_{e,F(u)}}{\sigma_0}$.
    For the first part, the induction hypothesis gives
\begin{equation}
        \label{eqn:case-2-c-1}
        \mathcal{D}_{S-f,F(u)}(A)
        \ge \frac{p}{\sigma_{e,F(u)}} \mathcal{D}_{S-f,F(u)+e}(A).
    \end{equation}
This multiplied by $\frac{\sigma_{e,F(u)}}{\sigma_0}$ matches the integrand in $\mathcal{D}_{S,F+e}(A)$ up to the required $\frac{p}{\sigma_0}$ factor.

    For the second part, we apply the induction hypothesis twice, first with filtration $F(u)$ and element $e$, and then with filtration $F(u)+e$ and element $f$.
    Since $f$ joins the filtration $F(u)+e$ at time $u$, we have $\sigma_{f,F(u)+e}=\min\{u,\sigma_1\}$.
    Together with $F(u)+e+f=F(p)+e$, we obtain
\begin{equation}
        \label{eqn:case-2-c-2}
        \mathcal{D}_{S-f,F(u)}(A)
        \ge \frac{p}{\sigma_{e,F(u)}} \mathcal{D}_{S-f,F(u)+e}(A)
        \ge \frac{p}{\sigma_{e,F(u)}}
        \frac{p}{\min\{u,\sigma_1\}} \mathcal{D}_{S-f,F(p)+e}(A).
    \end{equation}

    Weight \eqref{eqn:case-2-b-1} by $\frac{p}{t}\bigl(1-\ln\frac{\sigma_0}{p}\bigr)$, and integrate \eqref{eqn:case-2-c-2} over $u\in[p,t]$ with weight $\frac{1}{t}\bigl(1-\frac{\sigma_{e,F(u)}}{\sigma_0}\bigr)$.
    The resulting coefficient of $\mathcal{D}_{S-f,F(p)+e}(A)$ is
\[
        \frac{p^2}{t\sigma_1}\Bigl(1-\ln\frac{\sigma_0}{p}\Bigr)
        + \frac1t\int_p^{\sigma_1}\left(\frac{p}{\sigma_1}-\frac{p}{\sigma_0}\right)\frac{p}{u}\,du
        + \frac1t\int_{\sigma_1}^{\sigma_0}\left(\frac{p}{u}-\frac{p}{\sigma_0}\right)\frac{p}{\sigma_1}\,du = \frac{p}{\sigma_0} \frac{p}{t}\Bigl(1-\ln\frac{\sigma_1}{p}\Bigr),
    \]
    where we plug in $\sigma_{e,F(u)} = \min\{\sigma_0, \max\{\sigma_1, u\}\}$ from \Cref{lem:spanning-time-change}, with the integrand vanishing for $u \ge \sigma_0$.
    This matches the first term in $\mathcal{D}_{S,F+e}(A)$ up to the required $\frac{p}{\sigma_0}$ factor.
\end{proof}

\begin{corollary}
    \label{cor:main}
    For any subsets $A \subseteq S$, any $e \notin S$, and any $t \in [p, 1]$, we have
\[
        \mathcal{D}_{S,t}(A) \ge \frac{p}{t} \cdot \mathcal{D}_{S,t,e}(A).
    \]
\end{corollary}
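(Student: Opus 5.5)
The corollary should follow directly from \Cref{lem:distribution-change} specialized to the empty background filtration. I would apply that lemma with $F = \varnothing$, the constant empty filtration on $[p,t]$. This is trivially disjoint from $S$, and $e \notin S$ by hypothesis. The lemma then gives
\[
    \mathcal{D}_{S,t,\varnothing}(A) \ge \frac{p}{\sigma_{e,\varnothing}} \, \mathcal{D}_{S,t,\varnothing+e}(A).
\]
By the remark after the definition of survival processes with background filtration, the left-hand side is $\mathcal{D}_{S,t}(A)$. Also, $\varnothing + e$ is exactly the constant filtration $F_u = \{e\}$, which the paper denotes by $e$, so the right-hand distribution is $\mathcal{D}_{S,t,e}$.

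It remains to identify the spanning time $\sigma_{e,\varnothing}$. We need to know whether $e \in \closure(\varnothing)$. Because the model assumes without loss of generality that there are no loops, $e$ is not spanned by the empty set. So the set $\{u \in [p,t] : e \in \closure(\varnothing)\}$ is empty, and by the stated convention $\sigma_{e,\varnothing} = t$. Substituting gives the factor $\frac{p}{t}$, which is exactly the claimed inequality.

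The only point needing care is this loop-freeness step. If $e$ were a loop, the factor would be $\frac{p}{p} = 1$, which is even stronger. So the inequality $\frac{p}{\sigma_{e,\varnothing}} \ge \frac{p}{t}$ holds in either case, using $\sigma_{e,\varnothing} \le t$ and the nonnegativity of $\mathcal{D}_{S,t,e}(A)$. I do not expect any real obstacle; the content lies entirely in \Cref{lem:distribution-change}.
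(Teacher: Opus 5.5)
Your proposal is correct and matches the paper's proof: apply \Cref{lem:distribution-change} with $F=\varnothing$, and note that $\sigma_{e,\varnothing}=t$ because the empty set does not span the non-loop $e$. Your side remark that a loop would give $\sigma_{e,\varnothing}=p$, and hence a stronger factor, is valid but not needed.
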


\begin{proof}
    This follows from \Cref{lem:distribution-change} and $\sigma_{e,\varnothing} = t$, since the empty set does not span $e$.
\end{proof}

\section{Online Algorithm}

Let $p \in [\frac{1}{\e}, 1]$ be the same threshold as in the previous section.

\begin{algorithm}{Survival Simulation}
Maintain the set of arrived elements $S$ and the set of accepted elements $A$.
\begin{enumerate}
        \item Reject arrivals before time $p$.
        \item For an element $e$ arriving at time $t > p$:
\begin{enumerate}
                  \item If $e \notin \OPT(S+e)$, reject.
                  \item Otherwise, accept with probability $\frac{p}{t} \cdot \frac{\mathcal{D}_{S,t,e}(A)}{\mathcal{D}_{S,t}(A)}$.
\end{enumerate}
\end{enumerate}
\end{algorithm}

The algorithm is well-defined because the acceptance probability is at most $1$ by \Cref{cor:main}.
We follow the convention that $\frac{0}{0} = 0$, so the algorithm rejects $e$ when $\mathcal{D}_{S,t,e}(A) = \mathcal{D}_{S,t}(A) = 0$.
Moreover, $A + e$ is independent when $\mathcal{D}_{S,t,e}(A) > 0$, and therefore the set of accepted elements is always independent.

\begin{theorem}
    \label{thm:simulation}
    For any $t \in [p, 1]$ and conditional on the set $S$ of elements arrived by time $t$, the set of accepted elements $A$ follows distribution $\mathcal{D}_{S,t}$.
\end{theorem}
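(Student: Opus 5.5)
The plan is to prove \Cref{thm:simulation} as a forward-equation (Kolmogorov) identity for the finite-state Markov jump process $(S_t, A_t)_{t \in [p,1]}$ generated by the algorithm. For each $t$, define the candidate law
\[
\pi_t(S, A) = t^{|S|}(1-t)^{|E|-|S|}\, \mathcal{D}_{S,t}(A).
\]
Since $\Pr[S_t = S] = t^{|S|}(1-t)^{|E|-|S|}$, the theorem is equivalent to $\Pr[S_t = S, A_t = A] = \pi_t(S,A)$ for all $t \in [p,1]$.

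At $t = p$, every spanning time equals $p$, so every $\theta_e = 0$ and $\mathcal{D}_{S,p}$ is the point mass on $\varnothing$. This agrees with the algorithm, which has accepted nothing before time $p$. Both the true law and $\pi_t$ should satisfy the same linear ODE system on the finite state space, with jump rates $\frac{1}{1-t}$ for each unarrived element. Uniqueness of solutions to that system then gives equality for all $t \in [p,1]$. It is cleanest to factor out the $(1-t)$ terms, since they cancel against the outflow rate, and to work with
\[
G_{S,t}(A) = t^{|S|}\mathcal{D}_{S,t}(A).
\]
This quantity is the integral over arrival times in $[0,t]^S$, with Lebesgue density $1$, of the conditional survival probability. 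The identity to verify is then
\[
\partial_t G_{S,t}(A) = \sum_{f \in S} \Bigl[\text{flow into } (S,A) \text{ from } (S-f,\cdot) \text{ when } f \text{ arrives at } t\Bigr],
\]
where each flow term is weighted by $G_{S-f,t}$ of the source state.

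I would compute $\partial_t G_{S,t}(A)$ by differentiating the integral in $t$, which produces two kinds of terms.

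\textbf{Boundary terms.} For each $f \in S$ there is a term from the face $T_f = t$. On this face $f \notin S_p$, so $f$ cannot survive. Adding $f$ to the filtrations of smaller elements only at time $t$ changes no spanning time, because spanning times are already capped at $t$. So this face contributes exactly $G_{S-f,t}(A)$ when $f \notin A$, and $0$ when $f \in A$.

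\textbf{Interior terms.} The remaining $t$-dependence comes from the cap in the spanning times. An element $g$ with $\sigma_g = t$ has $\theta_g = \ln\frac{t}{p}$, with derivative $\frac{1}{t}$. These are exactly the $g \in S_p$ not spanned by bigger arrived elements of $S$. By monotonicity, every $g \in S \setminus \OPT(S)$ has $\sigma_g < t$ regardless of the arrivals, so its $\theta_g$ does not depend on $t$ locally. For $g \in \OPT(S)$, conditioned on $T_g \le p$, $g$ is in the filtration of every smaller element from time $p$ onward. The remaining elements then follow the background-filtration process $\mathcal{D}_{S-g,t,g}$. Integrating over $T_g \in [0,p]$ turns the derivative $\frac{1}{t}$ into the weight $\frac{p}{t}$. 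The interior derivative therefore contributes $+\frac{p}{t}G_{S-g,t,g}(A-g)$ when $g \in A$, and $-\frac{p}{t}G_{S-g,t,g}(A)$ when $g \notin A$. The minus sign reflects that the non-survival probability $1-\theta_g$ decreases.

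On the algorithm side, the transitions that would arise for an arrival $f$ arriving at time $t$ are as follows.
\begin{itemize}
\item If $f \notin \OPT(S)$, $f$ is rejected, giving inflow $G_{S-f,t}(A)$ into $(S,A)$. This matches the boundary term, with no interior term.
\item If $f \in \OPT(S)$ and $f \notin A$, $f$ is rejected with probability $1 - \frac{p}{t}\frac{\mathcal{D}_{S-f,t,f}(A)}{\mathcal{D}_{S-f,t}(A)}$. The inflow is $G_{S-f,t}(A) - \frac{p}{t}G_{S-f,t,f}(A)$, matching boundary plus interior.
\item If $f \in \OPT(S)$ and $f \in A$, $f$ is accepted from state $(S-f, A-f)$. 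The inflow is $\frac{p}{t}G_{S-f,t,f}(A-f)$, again matching.
\end{itemize}
This uses $\OPT(S-f+f) = \OPT(S)$ membership for $f$, which is what the algorithm tests, and the $0/0$ convention for the acceptance ratio.

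Summing over $f$ gives the forward equation, so $\pi_t$ solves it with the correct initial condition, and uniqueness finishes the proof.

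The step I expect to be the main obstacle is justifying the differentiation rigorously. This requires piecewise smoothness of the integrand in $t$, which follows from the finitely many arrival-order configurations. It also requires checking carefully that the interior derivative isolates exactly the elements of $\OPT(S)$, with the conditional law $\mathcal{D}_{S-g,t,g}$. That identification relies on the spanning-time monotonicity of \Cref{lem:spanning-time-monotone} and on the inheritance-of-optimality property \Cref{lem:opt-inheritance}.
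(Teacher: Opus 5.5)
Your argument is essentially the paper's proof, repackaged:

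\begin{itemize}
\item Your normalization $G_{S,t}=t^{|S|}\mathcal{D}_{S,t}$ is the paper's factor $t^n$, which absorbs the change-of-density effect.
\item Your boundary terms on the faces $T_f=t$ are the paper's new-arrival effect.
\item Your interior terms from the capped spanning times are the paper's increase-of-survival-probability effect.
\item Your uniqueness argument for the forward equation is the paper's induction on $|S|$ in another form, since inflow into $(S,\cdot)$ comes only from states $(S-f,\cdot)$.
\end{itemize}
The exact forward equation also spares you the paper's $O(h)$ error from evaluating the acceptance rule at $t$ instead of at the arrival time.

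One step, however, is not justified by the tools you cite: identifying the interior term for $g\in\OPT(S)$ with $\mathcal{D}_{S-g,t,g}$. In the actual process with $T_g\le p$, $g$ enters only the filtrations of elements \emph{smaller} than $g$. The background-filtration process $\mathcal{D}_{S-g,t,g}$, which is what appears in the algorithm's acceptance ratio, puts $g$ into the filtration of every $h\in S-g$, including every $h\succ g$. So you must show that adding $g$ leaves each bigger element's spanning time unchanged.

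\Cref{lem:spanning-time-monotone} only says that $\sigma_h$ can decrease, and \Cref{lem:opt-inheritance} does not address this at all. The needed fact is \Cref{lem:unspanned}, which rests on the exchange property:
\begin{itemize}
\item Since $g\in\OPT(S)$, we have $g\notin\closure(\{f\in S:f\succ g\})$.
\item For $h\succ g$, we have $S_{h,u}+h\subseteq\{f\in S:f\succ g\}$.
\item Hence $h\in\closure(S_{h,u}+g)$ if and only if $h\in\closure(S_{h,u})$.
\end{itemize}
The paper invokes exactly this lemma at this point. Without it, your interior term is the law of a different process and need not match the algorithm's inflow. With it, your proof goes through.
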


\begin{proof}
    Let $\mathcal{P}_{S,t}$ denote the distribution of the set of accepted elements conditional on the set $S$ of elements arrived by time $t$.
    We prove by induction on the size of $S$ that $\mathcal{P}_{S,t} = \mathcal{D}_{S,t}$ for $t \in [p, 1]$.
    The base case of $S = \varnothing$ is trivial.
    Next, consider any nonempty set $S$ of size $n$, and suppose the claim holds for smaller subsets.
    At time $t = p$, both $\mathcal{P}_{S,t}$ and $\mathcal{D}_{S,t}$ give an empty set almost surely.
    It remains to verify that their evolution dynamics over time are identical.

    \paragraph{Survival Process.}
To analyze $\frac{d}{dt} \mathcal{D}_{S,t}(A)$, we consider $\mathcal{D}_{S,t+h}$ for a small $h>0$, in comparison with $\mathcal{D}_{S,t}$.
    Below we use $\simeq$ to omit $O(h^2)$ terms and discuss the first-order effects.

    \bigskip
    \noindent
    \emph{New Arrival:~}
    Each $e \in S$ is the unique element arriving between $t$ and $t+h$ with probability $\simeq \frac{h}{t}$.
    We omit the $O(h^2)$-probability event of two or more arrivals in this interval.
    If $e \in A$, $A$ cannot be realized.
    If $e \notin A$, $A$ is realized with probability $\mathcal{D}_{S-e,t}(A) + O(h)$, where $O(h)$ accounts for the potential change of survival probability for elements in $S - e$.
    Thus, each element $e \in S \setminus A$ in this case contributes $\simeq \frac{h}{t} \mathcal{D}_{S-e,t}(A)$ to the probability of realizing $A$, and $\frac{1}{t} \mathcal{D}_{S-e,t}(A)$ to the derivative.

    \bigskip
    \noindent
    \emph{Change of Density:~}
The density of arrival vectors in $[0,t]^S$ reduces by a $\simeq 1 - \frac{nh}{t}$ factor, decreasing the probability of realizing $A$ by $\simeq \frac{nh}{t} \mathcal{D}_{S,t}(A)$, and contributing $- \frac{n}{t} \mathcal{D}_{S,t}(A)$ to the derivative.

    \bigskip
    \noindent
    \emph{Increase of Survival Probability:~}
    Conditional on all arrivals in $[0, t]$, each element $e \in \OPT(S)$ arrives by time $p$ with probability $\frac{p}{t}$.
    In that case, $e$ is not spanned by the bigger elements of $S$, so its spanning time equals the horizon.
    Extending the horizon to $t+h$ increases its survival probability by $\ln\frac{t+h}{p} - \ln\frac{t}{p} \simeq \frac{h}{t}$.\footnote{Elements not in $\OPT(S)$ are spanned by time $t$, and thus do not benefit from the extended horizon.}
Moreover, by \Cref{lem:unspanned}, adding $e$ to the filtration of any bigger element does not change its spanning time, while $e$ is already in the filtration of every smaller element from time $p$.
    Thus, the surviving elements other than $e$ follow $\mathcal{D}_{S-e,t,e}$ at horizon $t$, while extending the horizon to $t+h$ changes this conditional distribution by only $O(h)$, yielding an $O(h^2)$ term when multiplied by the $O(h)$ increase in $e$'s survival probability.
If $e \in A$, it increases the probability of realizing $A$ by $\simeq \frac{h}{t} \frac{p}{t} \mathcal{D}_{S-e,t,e}(A-e)$, contributing $\frac{p}{t^2} \mathcal{D}_{S-e,t,e}(A-e)$ to the derivative.
    If $e \notin A$, it decreases the probability of realizing $A$ by $\simeq \frac{h}{t} \frac{p}{t} \mathcal{D}_{S-e,t,e}(A)$, contributing $- \frac{p}{t^2} \mathcal{D}_{S-e,t,e}(A)$ to the derivative.

\bigskip

    Write $O = \OPT(S)$ for brevity in the rest of the proof.
    The above first-order effects give
\[
        \frac{d}{dt} \mathcal{D}_{S,t}(A) = - \frac{n}{t} \mathcal{D}_{S,t}(A) + \frac{1}{t} \biggl( \sum_{e \in O \cap A} \frac{p}{t} \mathcal{D}_{S-e,t,e}(A-e) - \sum_{e \in O \setminus A} \frac{p}{t} \mathcal{D}_{S-e,t,e}(A) + \sum_{e \in S \setminus A} \mathcal{D}_{S-e,t}(A) \biggr).
    \]

    \paragraph{Online Algorithm.}
    To analyze $\frac{d}{dt} \mathcal{P}_{S,t}(A)$, we consider $\mathcal{P}_{S,t+h}$ for a small $h>0$, in comparison with $\mathcal{P}_{S,t}$.
    Unlike in the survival process, the set of accepted elements changes only at arrivals.
    The algorithm's new-arrival effects account for both the new-arrival and the survival-probability effects in the survival process.

    \bigskip
    \noindent
    \emph{New Arrival:~}
    Each $e \in S$ is the unique element arriving between $t$ and $t+h$ with probability $\simeq \frac{h}{t}$.
    We omit the $O(h^2)$-probability event of two or more arrivals in this interval.
    When $e$ arrives in this interval, the set of accepted elements before $t$ follows $\mathcal{P}_{S-e,t} = \mathcal{D}_{S-e,t}$ by the induction hypothesis.

    If $e \in \OPT(S) \setminus A$, the algorithm rejects it with probability $1 - \frac{p}{t} \frac{\mathcal{D}_{S-e,t,e}(A)}{\mathcal{D}_{S-e,t}(A)} + O(h)$, where $O(h)$ is an error term for using $t$ instead of $e$'s arrival time in the expression of rejection probability.
    This case contributes
\[
        \simeq \frac{h}{t} \mathcal{D}_{S-e,t}(A) \Big( 1 - \frac{p}{t} \frac{\mathcal{D}_{S-e,t,e}(A)}{\mathcal{D}_{S-e,t}(A)} \Big) = \frac{h}{t} \mathcal{D}_{S-e,t}(A) - \frac{h}{t} \frac{p}{t} \mathcal{D}_{S-e,t,e}(A)
    \]
    to the probability of realizing $A$, and $\frac{1}{t} \mathcal{D}_{S-e,t}(A) - \frac{p}{t^2} \mathcal{D}_{S-e,t,e}(A)$ to the derivative.

    If $e \in \OPT(S) \cap A$, the algorithm accepts it with probability $\frac{p}{t} \frac{\mathcal{D}_{S-e,t,e}(A-e)}{\mathcal{D}_{S-e,t}(A-e)} + O(h)$, contributing $\frac{h}{t} \frac{p}{t} \mathcal{D}_{S-e,t,e}(A-e)$ to the probability of realizing $A$, and $\frac{p}{t^2} \mathcal{D}_{S-e,t,e}(A-e)$ to the derivative.

    If $e \in S \setminus \OPT(S)$, the algorithm rejects $e$ by definition.
    If further $e \notin A$, this case contributes $\frac{h}{t} \mathcal{D}_{S-e,t}(A)$ to the probability of realizing $A$, and $\frac{1}{t} \mathcal{D}_{S-e,t}(A)$ to the derivative.

    \bigskip
    \noindent
    \emph{Change of Density:~}
    The density of arrival vectors in $[0,t]^S$ reduces by a $\simeq 1 - \frac{nh}{t}$ factor.
    This decreases the probability of realizing $A$ by $\simeq \frac{nh}{t} \mathcal{P}_{S,t}(A)$, contributing $- \frac{n}{t} \mathcal{P}_{S,t}(A)$ to the derivative.

    \bigskip

    The first-order effects combine to give
\[
        \frac{d}{d t} \mathcal{P}_{S,t}(A)
        = - \frac{n}{t} \mathcal{P}_{S,t}(A) + \frac{1}{t} \biggl( \sum_{e \in O \cap A} \frac{p}{t} \mathcal{D}_{S-e,t,e}(A-e) - \sum_{e \in O \setminus A} \frac{p}{t} \mathcal{D}_{S-e,t,e}(A) + \sum_{e \in S \setminus A} \mathcal{D}_{S-e,t}(A) \biggr).
    \]

    Subtracting the two evolution equations shows that $t^n(\mathcal{P}_{S,t}(A)-\mathcal{D}_{S,t}(A))$ is constant.
    The common initial condition at $t=p$ makes this constant zero, i.e., $\mathcal{P}_{S,t}(A) = \mathcal{D}_{S,t}(A)$.
\end{proof}

\begin{proof}[Proof of \Cref{thm:main} (Except Running Time)]
    By \Cref{thm:simulation} at time $t = 1$, the final set of accepted elements follows distribution $\mathcal{D}_{E,1}$.
    By \Cref{lem:survival-competitive}, each element in $\OPT$ is accepted with probability $p \ln \frac{1}{p}$.
    Letting $p = \frac{1}{\e}$ gives the claimed probability-competitive ratio of $\e$.
\end{proof}

\section{Efficient Implementation}
\label{sec:polytime}

The acceptance rule in the previous section depends on the ratio of two probabilities under the survival processes.
We do not know how to evaluate the ratio efficiently.
This section gives an explicit coupling of the survival processes and the online algorithm's acceptance process.
By doing so, we avoid evaluating the ratio and obtain an efficient implementation.
Throughout this section, keep the same $p\in[\frac{1}{\e},1]$ as before.

\subsection{Latent Space for Survival Processes}
\label{subsec:latent-space}

Fix any subset $S$, time $t$, and filtration $F$ disjoint from $S$.
Write $X_S = (X_e)_{e\in S}\in[0,t]^S$ for the latent vector for $\mathcal{D}_{S,t,F}$;
we omit the subscript when $S = E$.
Let
\[
    \mu_{S,t}=\Unif([0,t]^S).
\]

Interpret each $X_e$ as follows.
If $X_e>p$, then $e$ arrives at time $X_e$.
If $X_e\le p$, then $e$ arrives by time $p$.
Since the precise arrival time does not matter, set the clipped arrival time to be $p$ and recycle $\frac{X_e}{p} \sim \Unif[0, 1]$ for the survival randomness.
Formally, define the clipped arrival time
\[
    T'_e=\max\{p,X_e\}.
\]

For each $e\in S$, let $\sigma_e(X_{S-e})$ be its spanning time with respect to the filtration
\[
    \left(
    F_u\cup\{f\in S:f\succ e,\ T'_f\le u\}
    \right)_{u\in[p,t]}.
\]

Define the set of survivors as
\[
    \mathcal A_{S,t,F}(X_S) =
    \left\{
    e\in S:
    X_e\le p\ln\frac{\sigma_e(X_{S-e})}{p}
    \right\}.
\]

By the above definition, we have the following lemma.

\begin{lemma}[Latent representation]
    \label{lem:poly-latent}
    If $X_S \sim \mu_{S,t}$, then $\mathcal A_{S,t,F}(X_S)\sim \mathcal{D}_{S,t,F}$.
\end{lemma}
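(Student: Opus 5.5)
The plan is to build an explicit coupling between the latent vector $X_S \sim \mu_{S,t}$ and a run of the survival process $\mathcal{D}_{S,t,F}$, and to check that the coupling reproduces steps 1--5 of that process in law. Once this is done, the survivor sets coincide pointwise, so they have the same distribution.

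\emph{Arrivals and spanning times.} Draw $X_S \sim \mu_{S,t}$ and set $T_e = X_e$ for every $e \in S$. These are independent $\Unif[0,t]$ arrival times, exactly as in step 1. For any $u \in [p,t]$, the event $T'_f \le u$ coincides with $X_f \le u$, because $u \ge p$. Hence the filtration $\bigl(F_u \cup \{f \in S : f \succ e,\ T'_f \le u\}\bigr)_{u\in[p,t]}$ equals $(F_u \cup S_{e,u})_{u\in[p,t]}$ from step 4. It follows that $\sigma_e(X_{S-e})$ is exactly the spanning time $\sigma_e$ of the process. It depends only on $X_{S-e}$, and it does so only through the arrival-by-$u$ indicators for $u \ge p$.

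\emph{Survival coins.} Condition on $X_{S-e}$. Then $X_e$ is still $\Unif[0,t]$, and it is independent of $\sigma_e$.
\begin{itemize}
\item If $X_e > p$, the element arrives after $p$. The defining inequality $X_e \le p\ln\frac{\sigma_e}{p}$ fails, since $\ln\frac{\sigma_e}{p} \le \ln\frac{1}{p} \le 1$ when $p \ge \frac{1}{\e}$. So $e$ does not survive, matching step 5.
\item If $X_e \le p$, write $U_e = X_e/p$. Conditional on $X_e \le p$, $U_e$ is $\Unif[0,1]$, and survival is the event $U_e \le \theta_e = \ln\frac{\sigma_e}{p}$. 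Its conditional probability is exactly $\theta_e$.
\end{itemize}

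\emph{Independence.} It remains to show that, jointly across elements, the $U_e$ act as independent coins that are independent of everything the $\sigma$'s depend on. The key observation is that the $\sigma$'s are functions only of the coarse variables: the indicators $\mathbf 1[X_f \le p]$ together with the values $X_f$ on $\{X_f > p\}$. Conditional on these coarse variables, the $U_f$ for $f \in S_p$ are i.i.d.\ $\Unif[0,1]$, because independent uniforms restricted to $[0,p]$ remain independent uniforms after rescaling. Thus, conditional on the arrival data, the events $\{U_e \le \theta_e\}$ for $e \in S_p$ are independent with probabilities $\theta_e$, which is precisely step 5.

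I expect this last independence step to be the only point needing care. The recycled randomness $X_e/p$ must not leak into any $\sigma_f$. This is guaranteed because $T'_f$ clips all arrival times at or below $p$ to $p$, and the filtrations are only examined for $u \ge p$.
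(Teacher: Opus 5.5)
Your proof is correct and takes the same approach as the paper, which states the lemma simply ``by the above definition''. That definition rests on exactly the interpretation you make rigorous: the spanning times depend only on the clipped arrival times $T'_f=\max\{p,X_f\}$, while $X_e/p$ for $e\in S_p$ is recycled as fresh $\Unif[0,1]$ survival randomness; your conditioning on the coarse data (the indicators $\mathbf{1}[X_f\le p]$ and the values $X_f$ with $X_f>p$) supplies the independence check that the paper leaves implicit.
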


\subsection{Efficient Online Algorithm}
\label{subsec:efficient-online}

We strengthen \Cref{lem:distribution-change} from a pointwise inequality on survivor-set probabilities to an explicit transport on the latent spaces.  The proof is deferred to \Cref{subsec:constructive-proof}.

\begin{lemma}[Algorithmic version of \Cref{lem:distribution-change}]
    \label{lem:constructive-domination}
    For any subset $S$, time $t$, filtration $F$ disjoint from $S$, and element $e\notin S$, there is a forward kernel
\[
        \ForwardKernel_{S,t,F,e}:[0,t]^S\to[0,t]^S\cup\{\perp\}
    \]
    from the latent space of $\mathcal{D}_{S,t,F}$ to that of $\mathcal{D}_{S,t,F+e}$, where $\perp$ denotes failure, and a reverse kernel
    \[
        \ReverseKernel_{S,t,F,e}:[0,t]^S\to[0,t]^S
    \]
    from the latent space of $\mathcal{D}_{S,t,F+e}$ back to that of $\mathcal{D}_{S,t,F}$, with the following properties.
    \begin{enumerate}[label=(\roman*),leftmargin=*]
        \item \textbf{Adjoint identity.}
              The following equality holds as measures on $[0,t]^S\times[0,t]^S$
              \begin{equation}
                  \label{eq:poly-adjoint}
                  \mu_{S,t}(dX_S)\,
                  \ForwardKernel_{S,t,F,e}(X_S,dY_S)
                  =
                  \frac{p}{\sigma_{e,F}}\,\mu_{S,t}(dY_S)\,
                  \ReverseKernel_{S,t,F,e}(Y_S,dX_S).
              \end{equation}
              In particular, for a uniform input, $\ForwardKernel$ succeeds with probability exactly $\frac{p}{\sigma_{e,F}}$, and conditional on success its output is uniform.

        \item \textbf{Survivor preservation.}  Every pair $(X_S,Y_S)$ produced by a successful forward call or by a reverse call satisfies
              \[
                  \mathcal A_{S,t,F}(X_S) = \mathcal A_{S,t,F+e}(Y_S).
              \]

        \item \textbf{Monotone reverse kernel.}  Every such pair and their common survivor set $A$ satisfy
              \[
                  X_f\le Y_f \quad(f\in A),
                  \qquad
                  X_f\ge Y_f \quad(f\notin A).
              \]

        \item \textbf{Complexity.}  For a uniform input, $\ForwardKernel_{S,t,\varnothing,e}$ and $\ReverseKernel_{S,t,\varnothing,e}$ have expected polynomial time and oracle complexity in the size of $S$, in the unknown matroid model with only $S+e$ revealed.
\end{enumerate}
\end{lemma}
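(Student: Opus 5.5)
We will build the two kernels by the same induction on $|S|$ as in the proof of \Cref{lem:distribution-change}, turning each inequality there into an explicit measure-preserving transport on the latent coordinates. For $S=\varnothing$ both kernels are the trivial maps on a point; since $\sigma_{e,F}$ may be less than $t$, the forward kernel succeeds with an auxiliary coin of probability $\frac{p}{\sigma_{e,F}}$, and this matches the adjoint identity. For nonempty $S$, let $f$ be the biggest element of $S$. Both kernels first look at $X_f$ (respectively $Y_f$), which encodes whether $f$ arrived by time $p$, whether it survived (threshold $p\ln\frac{\sigma}{p}$), or its arrival time $u=X_f>p$. The induced filtration on $S-f$ is $F(\max\{p,u\})$. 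We then recurse on $S-f$ with kernels $\ForwardKernel_{S-f,t,F(\cdot),e}$ and, where the proof of \Cref{lem:distribution-change} applies the induction hypothesis twice, we compose with $\ForwardKernel_{S-f,t,F(u)+e,f}$. Each reverse kernel is built as the adjoint of its forward kernel: sample the $f$-coordinate from the appropriate conditional law, then call the recursive reverse kernels.

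The cases follow the proof of \Cref{lem:distribution-change}. In Case 1, $\sigma_{e,F(u)}=\sigma_{e,F}$ for all $u$. Here we keep $X_f$ unchanged, $Y_f=X_f$, and recurse on $S-f$; the adjoint identity factorizes. In Case 2, each inequality becomes a transport of the $f$-coordinate combined with the recursive kernels.

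\textbf{Survivor branch of $f$.} Inequality \eqref{eqn:case-2-a-1} becomes a monotone rescaling. We keep the mass $X_f\le p\ln\frac{\sigma_1}{p}\cdot\frac{\sigma_1}{\sigma_0}$-proportionally, mapping $X_f\in[0,p\ln\frac{\sigma_0}{p}]$ onto $Y_f\in[0,p\ln\frac{\sigma_1}{p}]$, so that $Y_f\ge X_f$ holds on survivors; the remaining mass is failure. This is composed with the recursion from \eqref{eqn:case-2-a-2}.

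\textbf{Non-survivor branches.} For $X_f\in(p\ln\frac{\sigma_0}{p},p]$, apply the recursive kernel from \eqref{eqn:case-2-b-1}. For an arrival $u=X_f>p$, split with an auxiliary uniform in proportions $\frac{\sigma_{e,F(u)}}{\sigma_0}$ versus the rest. In the first part keep $Y_f=u$ and use \eqref{eqn:case-2-c-1}. In the second part use the composed kernel \eqref{eqn:case-2-c-2} and move $f$ into the non-surviving early region $Y_f\in(p\ln\frac{\sigma_1}{p},p]$. The coefficient identity at the end of the proof of \Cref{lem:distribution-change} is then exactly the statement that the pushed-forward mass fills that region with density $\frac{p}{\sigma_0}$ times uniform. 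We realize this by an explicit one-dimensional monotone map (inverse CDF), with $Y_f\le X_f$ as required for non-survivors.

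\textbf{Properties (ii) and (iii).} Survivor preservation follows because every branch preserves the survival status of $f$, and the recursion preserves the survivor set of $S-f$ under the corresponding filtrations. Monotonicity (iii) follows coordinatewise from the choices above and from the inductive hypothesis.

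\textbf{Complexity (iv).} Each level evaluates $\sigma_{f,F}$, $\sigma_{f,F+e}$ and $\sigma_{e,F(u)}$ using \Cref{lem:spanning-time-change}. These are computable from clipped arrival times with independence and comparison oracles on $S+e$; with $F=\varnothing$ at the top level, the recursive filtrations consist only of elements of $S+e$ with known activation times. The recursion is a chain of depth $|S|$, except in the composed branch \eqref{eqn:case-2-c-2}, which makes two recursive calls, so a naive bound is exponential.

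\textbf{Main obstacle.} The hardest step is this composed branch and the expected complexity bound. I would try to show that the second call, $\ForwardKernel_{S-f,t,F(u)+e,f}$, is reached only with probability proportional to $1-\frac{\sigma_{e,F(u)}}{\sigma_0}$. I would then argue by a potential function, such as expected number of calls at most $|S|^2$ under uniform input, using the adjoint identity, which guarantees that every intermediate input is uniform. A secondary difficulty is checking that the inverse-CDF map in the non-survivor branch is consistent with the adjoint identity simultaneously over all $u$.
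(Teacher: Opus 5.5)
Your construction for (i)--(iii) is essentially the paper's. It uses the same recursion on the biggest element $f$ and the same failure region $\frac{\sigma_1}{\sigma_0}b<X_f\le a$ with rescaling $Y_f=\frac{\sigma_0}{\sigma_1}X_f$. It also uses the same $\frac{\sigma_{e,F(u)}}{\sigma_0}$ split for late $f$ and the same composed call $F(u)\to F(u)+e\to F(p)+e$. The paper fills $(b,p]$ with a linear map in $\ForwardKernel$-2(b) plus an independent $\Unif(b,p)$ draw in $\ForwardKernel$-2(c)(ii), and uses the $\lambda$/$\rho$ mixture in the reverse kernel. Your single monotone inverse-CDF map instead is a harmless variant. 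Conditional on success and $X_f$ the recursive outputs are uniform, so only the one-dimensional law of $Y_f$ matters. However, $Y_f\le X_f$ on $(a,p]$ then reduces to $\frac{\sigma_0}{\sigma_1}(p-a)\le p-b$, i.e.\ $\lambda\le 1$. You must check this; it follows from $s\mapsto s(1-\ln\frac{s}{p})$ being decreasing on $[p,1]$.

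The genuine gap is (iv), which you leave as an open obstacle. It does not suffice to know that the second call in the composed branch is reached with probability $(1-\frac{\sigma_{e,F(u)}}{\sigma_0})\frac{p}{\sigma_{e,F(u)}}$. Integrated over $X_f=u$, this equals $\frac1t\bigl(a-b-p^2(\frac1{\sigma_1}-\frac1{\sigma_0})\bigr)$, which can be a constant (e.g.\ $p^2$ when $\sigma_1=p$, $\sigma_0=t=1$). Unless something offsets the extra call, the expected branching factor exceeds one and could compound to exponentially many calls, so a potential function giving $|S|^2$ has nothing to work with.

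The missing idea is that the failure branch $\ForwardKernel$-2(a)(i) makes no call at all and has probability $\frac1t(a-\frac{\sigma_1}{\sigma_0}b)$. This is at least the two-call probability because $b\ge0$ and $\sigma_1\le\sigma_0$. Hence every call with uniform input makes at most one direct recursive call in expectation. Each first call gets a uniform input given $X_f$ and the branch, and each second call gets a uniform input given success of the first (by the adjoint identity). So the expected number of calls at each depth is at most one, for at most $|S|+1$ forward calls in total.

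You also need a separate bound for the reverse kernel, which you do not address. Matching reverse executions with successful forward executions shows that its expected call count on a uniform input equals the forward count conditional on success. That is at most $\frac{\sigma_{e,F}}{p}(|S|+1)\le\e(|S|+1)$.
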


For later use, define the pointwise success probability
\[
    r_{S,t,F,e}(X_S)
    =
    \Pr\!\left[
        \ForwardKernel_{S,t,F,e}(X_S)\ne\perp
        \right].
\]

Integrating \eqref{eq:poly-adjoint} over $Y_S \in [0, t]^S$ and $X_S \in [0, t]^S$ restricted to $\mathcal A_{S,t,F}(X_S) = A$ and using survivor preservation, we have
\[
    \E\!\left[
        r_{S,t,F,e}(X_S)
        \mathrel{\big|}
        \mathcal A_{S,t,F}(X_S)=A
        \right]
    \cdot \mathcal{D}_{S,t,F}(A)
    =
    \frac{p}{\sigma_{e,F}}
    \mathcal{D}_{S,t,F+e}(A),
\]
which implies the original \Cref{lem:distribution-change}.

We now construct the online map from physical arrival randomness to the latent state.  Define
\[
    g(t)=p\ln\frac{t}{p},
    \qquad
    g'(t)=\frac{p}{t},
    \qquad
    g^{-1}(u)=p \, \e^{\frac{u}{p}},
\]
where $g$ maps a latent variable in $[p, 1]$ to one in $[0, p \ln \frac{1}{p}]$, and $g^{-1}$ maps backward.

\begin{algorithm}{Efficient Survival Simulation}

    Maintain the sets of arrived and accepted elements $S$, $A$, and latent vector $X_S \in [0,1]^S$.

    \begin{enumerate}[leftmargin=*]
        \item \textbf{Before time $p$:~}  Reject every $e$ and let its arrival time be latent variable $X_e \leftarrow T_e$.

        \item \textbf{Arrival of $e$ at $T_e > p$:}
\begin{enumerate}
                  \item If $e\notin\OPT(S+e)$, reject $e$ and let its arrival time be latent variable $X_e \leftarrow T_e$.

                  \item If $e\in\OPT(S+e)$, let $Y_S \leftarrow \ForwardKernel_{S,T_e,\varnothing,e}(X_S)$.
                        \begin{enumerate}
                            \item If it fails, leave $X_S$ unchanged, reject $e$, and let $X_e \leftarrow T_e$.
                            \item Otherwise, accept $e$, and let $X_S\leftarrow Y_S$, $X_e\leftarrow g(T_e)$.
                        \end{enumerate}
              \end{enumerate}

        \item \textbf{Reverse events:}
              \begin{enumerate}
                  \item For every $f\in\OPT(S)\setminus A$ with latent variable $X_f\le p$, let its reverse time be
                        \[
                            R_f=g^{-1}(X_f).
                        \]
                        Recompute the reverse time after every arrival or reverse event.
                  \item At the reverse time $R_f$ of element $f$, let $X_{S-f} \leftarrow \ReverseKernel_{S-f,R_f,\varnothing,f}(X_{S-f})$, $X_f \leftarrow R_f$.
\end{enumerate}
    \end{enumerate}
\end{algorithm}

We first explain the intuition behind the changes of latent variables.
Step 2-(b)-ii corresponds to the transportation of probability mass from $A$ to $A+e$ due to $e$'s arrival in $T_e \in [t,t+h]$ in the algorithm, and due to $e$'s survival probability increasing by $\ln \frac{t+h}{p} - \ln \frac{t}{p}$ in the survival process.
The latter corresponds to $\frac{X_e}{p} \in [\ln \frac{t}{p}, \ln \frac{t+h}{p}]$ and thus $X_e \in [g(t), g(t+h)]$ with our latent representation.
Hence, $X_e \leftarrow g(T_e)$ correctly maps the former infinitesimal interval to the latter.

Conversely, when the time horizon extends to the reverse time $R_f = g^{-1}(X_f)$ of $f$, $f$ would become a survivor with latent variable $X_f$.
Since $f$ is already rejected, we remap the latent variables using the reverse kernel.
As we shall see, this mapping complements the failure case of the forward kernel in step 2-(b)-i to restore a uniform distribution over the latent variables.

Next, we will prove four lemmas, whose combination certifies the efficiency and correctness of the algorithm with the optimal $\e$-probability-competitiveness.

\begin{lemma}
    \label{lem:latent-to-set}
    Efficient Survival Simulation maintains $A=\mathcal A_{S,t,\varnothing}(X_S)$ for $t \in [p, 1]$ almost surely.
\end{lemma}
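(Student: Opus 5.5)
We argue that the invariant $A=\mathcal A_{S,t,\varnothing}(X_S)$ holds at every $t\in[p,1]$, by induction over the almost surely finite sequence of events: arrivals and reverse events. Between consecutive events, $S$, $A$ and $X_S$ stay fixed, and only the horizon $t$ changes. So the plan has three parts. First, show that the invariant holds at $t=p$. Second, show that it persists as $t$ grows between events. Third, show that each event type preserves it.

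\textbf{Initialization.} At $t=p$, every arrived $e$ has $X_e=T_e\le p$ and $A=\varnothing$. With horizon $p$, every spanning time equals $p$, so each threshold is $p\ln 1=0$. Almost surely no $X_e$ equals $0$, hence $\mathcal A_{S,p,\varnothing}(X_S)=\varnothing$.

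\textbf{Horizon growth.} Fix $X_S$ and let the horizon $t$ increase without any event. The clipped arrival times do not change, so each $\sigma_e(X_{S-e})$ is unchanged unless it equals the horizon, i.e.\ unless $e$ is not spanned by the bigger elements of $S$. By monotonicity, that happens exactly when $e\in\OPT(S)$. For $e\notin\OPT(S)$, membership in $\mathcal A$ is therefore frozen. For $e\in\OPT(S)$, the threshold $g(t)=p\ln\frac{t}{p}$ increases, so $e$ enters $\mathcal A$ exactly when $t$ passes $g^{-1}(X_e)$, provided $X_e\le p$. There are two cases.
\begin{itemize}
\item If $e\in A$, then $e$ was accepted with $X_e=g(T_e)\le g(t)$. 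We must check that $X_e$ is never increased by later kernels. By the monotone property (iii), survivors only have their coordinates decreased in a reverse call and increased in a forward call, while staying survivors by (ii). So $e$ stays in $\mathcal A$.
\item If $e\in\OPT(S)\setminus A$ and $X_e\le p$, the moment $g^{-1}(X_e)$ is exactly the scheduled reverse time $R_e$, which is handled as an event below.
\end{itemize}
We also have to check that no element of $A$ drops out of $\OPT(S)$ between events; since $S$ does not change between events, this is immediate.

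\textbf{Events.}
\begin{itemize}
\item \emph{Arrival of $e$ at $t=T_e$ with $e\notin\OPT(S+e)$.} Set $X_e=T_e$. Because $e$ is spanned by bigger elements whose clipped times are at most $t$, we get $\sigma_e=t$ and the threshold is $g(t)<t$, so $e\notin\mathcal A$. For the other elements, apply \Cref{lem:unspanned} with $T$ the set of elements bigger than $e$ up to $f$: adding a spanned element never changes another element's spanning time. Hence $\mathcal A$ is unchanged.
\item \emph{Arrival with a failed forward call.} This is analogous, except that $e\in\OPT(S+e)$ and $X_e=T_e$. Since $T_e>p\ge g(T_e)$, $e$ is not a survivor. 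The other elements, however, now see $e$ in their filtration from time $T_e$ onward, which is the horizon, so their spanning times are unaffected.
\item \emph{Successful forward call.} By survivor preservation (ii), $\mathcal A_{S,t,\varnothing}(X_S)=\mathcal A_{S,t,e}(Y_S)$. We then show that $\mathcal A_{S+e,t,\varnothing}(Y_S,g(t))=\mathcal A_{S,t,e}(Y_S)+e$. First, $e$ survives, since $\sigma_e=t$ and $g(t)\le g(t)$. Second, $X_e=g(t)\le p$ gives clipped time $p$. So $e$ sits in the filtration of every smaller element from time $p$, while for bigger elements the same argument via \Cref{lem:unspanned} as in the proof of \Cref{thm:simulation} shows that $e$ is irrelevant. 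This matches the constant filtration $e$.
\item \emph{Reverse event at $R_f$.} This is the mirror image. Since $X_f=g(R_f)$ has clipped time $p$, the state before the event, viewed as $(X_{S-f},f)$, has survivor set $\mathcal A_{S-f,R_f,f}(X_{S-f})+f$ by the same decomposition. Applying $\ReverseKernel$ and (ii) turns this into $\mathcal A_{S-f,R_f,\varnothing}$, and setting $X_f=R_f>p$ makes $f$ a non-survivor that sits at the horizon, hence is irrelevant to the others.
\end{itemize}

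\textbf{Main obstacle.} The delicate step is the decomposition identity: an element $e\in\OPT(S)$ with clipped time $p$ contributes to the other elements' spanning times exactly as the constant background filtration $e$ would. For this I would reuse the \Cref{lem:unspanned} argument from the survival-probability paragraph of \Cref{thm:simulation}. I would also need to confirm that reverse times are consistent with (iii), so that accepted elements never hit a spurious reverse event; this holds because reverse events are only scheduled for $f\notin A$.
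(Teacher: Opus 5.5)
Your proposal follows essentially the same route as the paper's proof. The shared steps are: the invariant holds at $t=p$; a rejected arrival is released at the horizon and cannot survive; an accepted element, or a reverse-triggering $f\in\OPT(S)\setminus A$, with clipped time $p$ acts as the constant background filtration via \Cref{lem:unspanned}; survivor preservation handles the kernel calls; and between events only thresholds of elements of $\OPT(S)$ move, and only upward.

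There are a few local slips you should fix.
\begin{itemize}
\item For an arrival $e\notin\OPT(S+e)$ you only have $\sigma_e\le t$, not $\sigma_e=t$.
\item In that same case, \Cref{lem:unspanned} does not apply, since $e$ is spanned. Use the release-at-horizon argument from your next bullet instead, together with $X_e=T_e>p$, which exceeds every threshold.
\item In the horizon-growth step, an accepted $e\in\OPT(S)$ stays a survivor because the invariant at the last event time $t_0$ gives $X_e\le g(t_0)\le g(t)$. It is not because $X_e$ is never increased; forward calls do increase survivors' coordinates, as property (iii) states.
\end{itemize}
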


\begin{proof}
    At time $p$, both $A$ and $\mathcal{A}_{S,t,\varnothing}(X_S)$ are empty.
    It remains to consider the updates.

    First, consider an element $e$ arriving at time $T_e >p$.
    If $e$ is rejected and placed at $X_e \leftarrow T_e$, it is late and cannot survive.
    Its release at the horizon $T_e$ does not change any earlier spanning time, so the set of survivors stays unchanged.

    Next, suppose $e\in\OPT(S+e)$ and the forward kernel succeeds with output $Y_S$.
    Since $e \in \OPT(S+e)$, it is not spanned by the elements of $S$ bigger than $e$.
    By \Cref{lem:unspanned}, adding $e$ to the filtrations of the bigger elements does not change their spanning times, and $X_e = g(T_e) \le p$ places $e$ in the filtrations of the smaller elements from time $p$.
    Hence, $e$ may be considered as background filtration for all existing elements.
    By survivor preservation (\Cref{lem:constructive-domination}, part (ii)), the old survivor set is unchanged when $X_S$ is replaced by $Y_S$.
    Since $e\in\OPT(S+e)$, its spanning time is $T_e$, and $X_e=g(T_e)=p\ln\frac{T_e}{p}$, so $e$ lies exactly on its survival boundary and is added to the survivor set.

    Finally, only elements in $\OPT(S)$ have moving survival thresholds between events.
    Existing survivors remain survivors.
    Right before $f \in \OPT(S) \setminus A$ could become a new survivor, a reverse event would be triggered.
    Before that moment, $X_f \le p$ places $f$ in the filtrations of the smaller elements from time $p$, and since $f \in \OPT(S)$, adding $f$ to the filtrations of the bigger elements does not change their spanning times by \Cref{lem:unspanned}.
    Hence, $f$ acts as background filtration for all existing elements.
Applying the reverse kernel $\ReverseKernel_{S-f,R_f,\varnothing,f}$ removes this background while preserving their survivor set, and setting $X_f = R_f > p$ keeps $f$ rejected.
\end{proof}

\begin{lemma}
    \label{lem:reverse-events-bound}
    During the time interval while the set of arrived elements is $S$, at most $|S|$ reverse events occur.
\end{lemma}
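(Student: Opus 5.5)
The plan is to attach to each reverse event a distinct element of $S$ and show that no element can be charged twice. The key observation is that a reverse event at $f$ sets $X_f \leftarrow R_f = g^{-1}(X_f)$. This value is at least $p\,\e^0 = p$, and strictly greater than $p$ as soon as $R_f > p$. Once $X_f > p$, the element $f$ fails the eligibility condition $X_f \le p$ of step 3(a). So it suffices to show that, while no new arrival occurs, $X_f$ never moves back to $p$ or below.

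Fix a time interval during which the set of arrived elements stays equal to $S$. I will use three facts about this interval, in order.

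First, $\OPT(S)$ does not change, since it depends only on $S$.

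Second, the accepted set $A$ does not change. Acceptances happen only at arrivals, and by \Cref{lem:latent-to-set} each reverse event preserves the survivor set $\mathcal A_{S,t,\varnothing}(X_S)=A$. Therefore the set $\OPT(S)\setminus A$ of candidates for reverse events is fixed throughout the interval.

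Third, every later reverse event at some other element $f'$ replaces $X_{S-f'}$ by the output of $\ReverseKernel_{S-f',R_{f'},\varnothing,f'}$. By survivor preservation, part (ii) of \Cref{lem:constructive-domination}, the input and output share the survivor set $A$ (restricted to $S-f'$). The monotonicity property, part (iii), then says that each non-survivor's latent variable in the output is at least its value in the input. Note that in the pair $(X_S,Y_S)$ of that lemma, the reverse kernel's output is $X_S$ and its input is $Y_S$, so the inequality $X_g \ge Y_g$ for $g\notin A$ indeed says latent variables of non-survivors only increase. Since $f\notin A$, its latent variable can only grow after its own reverse event, so it stays above $p$. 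Hence $f$ never again has a reverse time during the interval.

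Combining these, each element of $\OPT(S)\setminus A$ triggers at most one reverse event in the interval. This gives at most $|\OPT(S)\setminus A|\le |S|$ reverse events.

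The main obstacle is a degenerate edge case. If $R_f = p$, which happens exactly when $X_f = 0$, the argument above breaks because $X_f$ would be reset to $p$ rather than strictly above it. I would dispose of this as a probability-zero event under the uniform latent distribution, or alternatively note that reverse events only occur at times $R_f > p$, since the interval lies after time $p$.

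I would also double-check that a reverse event for $f$ is genuinely triggered at most at a single time. Reverse times are recomputed after every event, so $R_f$ could be postponed repeatedly. But that only delays the single charge to $f$; it does not create additional events, because an event is counted only when the current time actually reaches $R_f$ and the reassignment $X_f\leftarrow R_f$ happens.
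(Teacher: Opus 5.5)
Your proposal is correct and follows the paper's argument: you charge each reverse event to its triggering element, note that this element's coordinate is moved above $p$, and use survivor preservation together with monotonicity of the reverse kernel (parts (ii) and (iii)) to show that coordinates of rejected elements only increase, so no element triggers twice. The paper leaves implicit the points you make explicit, namely the input/output orientation of the pair in part (iii) and the measure-zero case $X_f = 0$; your sharper count $|\OPT(S)\setminus A|$ also suffices.
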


\begin{proof}
    By the monotonicity of the reverse kernel (\Cref{lem:constructive-domination}, part (iii)), a reverse event can only increase the coordinates of rejected elements, and the set of accepted elements remains unchanged by survivor preservation (\Cref{lem:constructive-domination}, part (ii)).
    Further, the triggering coordinate itself is moved above $p$.
    Hence, no rejected element can trigger twice.
\end{proof}

\begin{lemma}\label{lem:poly-uniform}
    At every time $t\in[p,1]$ and conditional on any set of arrived elements $S$, we have
    \[
        X_S \sim \mu_{S,t},
        \qquad
        A \sim \mathcal{D}_{S,t}.
    \]
\end{lemma}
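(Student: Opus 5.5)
The second claim follows from the first: by \Cref{lem:latent-to-set}, $A=\mathcal A_{S,t,\varnothing}(X_S)$ almost surely, and by \Cref{lem:poly-latent} this set has law $\mathcal{D}_{S,t}$ once $X_S\sim\mu_{S,t}$. The work is therefore to show that the latent vector stays uniform.

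We argue by induction on $|S|$, as in \Cref{thm:simulation}, tracking the joint law of $X_S$ conditional on $S$ through time. At $t=p$, every arrived element has $X_e=T_e$, and conditional on $S$ these are i.i.d.\ $\Unif[0,p]$, which is $\mu_{S,p}$. For $t>p$ we compare the density of $X_S$ at $t+h$ with the density at $t$ to first order in $h$, and show that it satisfies the same linear evolution as the uniform density $t^{-n}\mathbf 1[X_S\in[0,t]^S]$. As in \Cref{thm:simulation}, the common initial condition then forces equality.

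In the interval $[t,t+h]$, three effects act on the conditional density.

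\emph{Density dilution.} Conditioning on $S$ at horizon $t+h$ rather than $t$ changes normalization by a factor $1-\frac{nh}{t}$.

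\emph{A new arrival $e$.} By induction, $X_{S-e}\sim\mu_{S-e,t}$. The cases are as follows.
\begin{itemize}
\item If $e$ is rejected with $X_e\leftarrow T_e\in[t,t+h]$, it fills the slab $X_e\in[t,t+h]$ with mass $\frac{h}{t}$ times the failure-or-non-OPT density of $X_{S-e}$.
\item If the forward kernel succeeds, $X_e\leftarrow g(T_e)\in[g(t),g(t+h)]$. This slab has width $\frac{p}{t}h$, so its density is $\frac1t\cdot\frac{t}{p}$ times the success mass. By the adjoint identity \eqref{eq:poly-adjoint} with $\sigma_{e,\varnothing}=t$, the output is uniform with total success mass $\frac{p}{t}$. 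The slab therefore receives exactly uniform density.
\item On failure, the residual density of $X_{S-e}$ on the slab $[t,t+h]$ is $\mu(1-r_{S-e,t,\varnothing,e})$, which is not uniform.
\end{itemize}

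\emph{Reverse events.} For $f\in\OPT(S)\setminus A$ with $X_f\in[g(t),g(t+h)]$, mass is moved to $X_f=R_f\in[t,t+h]$ and $X_{S-f}$ is pushed through $\ReverseKernel$. The density of this mass before the move is uniform by the inductive hypothesis at time $t$ for $S$ itself. Using \eqref{eq:poly-adjoint} in the reverse direction, the reverse kernel applied to uniform input yields the measure $\mu(dX)\,K(X,dY)$ integrated, i.e.\ $\frac{p}{t}\mu\cdot r$. After accounting for the Jacobian $\frac{p}{t}$ between the two slabs, the reverse events deposit density proportional to $r_{S-f,t,\varnothing,f}$ on the slab $X_f\in[t,t+h]$.

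Adding the failure and reverse contributions on the slab $X_f\in[t,t+h]$ gives density proportional to $(1-r)+r=1$, so the slab is uniformly filled. The region $X_f\in[g(t),g(t+h)]$ for non-accepted $f$ is vacated by reverse events and refilled by successful forward calls, whose output is uniform with the matching rate $\frac{p}{t}$.

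The main obstacle is the bookkeeping of these complementary flows. One must check:
\begin{itemize}
\item that the regions vacated and refilled coincide exactly, namely coordinates $X_f\in[g(t),g(t+h)]$ with $f\in\OPT(S)$, using \Cref{lem:latent-to-set} to identify which coordinates trigger reverse events;
\item that elements $f\notin\OPT(S)$ or with $X_f$ outside the slab are untouched;
\item that $O(h^2)$ events, such as multiple arrivals or multiple reverse events, are negligible. Here \Cref{lem:reverse-events-bound} bounds the number of reverse events per interval.
\end{itemize}

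Once the first-order flows are seen to balance, the density satisfies $\frac{d}{dt}(t^n\,\mathrm{density})=0$ on $[0,t]^S$, which completes the proof.
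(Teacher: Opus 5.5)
Your slab bookkeeping matches the paper's. On the late slab $X_f\in(t,t+h]$, the failure mass $1-r$ and the reverse-event mass $r$ add up to the uniform density. The early slab $X_f\in(g(t),g(t+h)]$ is emptied by reverse events and refilled uniformly by successful forward calls, via \eqref{eq:poly-adjoint} and the Jacobian $g'(t)=\frac{p}{t}$. The paper runs this on the full vector $X\in[0,1]^E$, keeping unarrived coordinates at $X_e=T_e$. Arrivals are then just transport out of the late slice, so neither induction on $|S|$ nor a dilution factor is needed, and conditioning on $S$ is restriction to $\{X_S\le t,\ X_{E-S}>t\}$ at the end. Your conditional-on-$S$ version is a legitimate repackaging of that.

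The gap is in how you close the argument. You evaluate the reverse-event flow ``by the inductive hypothesis at time $t$ for $S$ itself.'' In an induction on $|S|$, that is exactly the statement being proved, so your computation only shows that the uniform law is a fixed point of the normalized dynamics. The appeal to \Cref{thm:simulation} does not supply uniqueness. There, the only term involving $\mathcal{P}_{S,t}$ on the right-hand side was $-\frac{n}{t}\mathcal{P}_{S,t}(A)$, so each $A$ gave a scalar linear ODE with known forcing. Here, reverse events act on the unknown conditional law of $X_S$ itself, through its mass at the moving hyperplanes $X_f=g(t)$ pushed through $\ReverseKernel$. Your final claim $\frac{d}{dt}(t^n\cdot\text{density})=0$ is available only once the density is already known to be uniform.

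To close the argument within your framework, write $\phi^S_t$ for the conditional law of $X_S$ given $S$ at time $t$. Up to multi-arrival terms of total weight $O(h^2)$, you have $\phi^S_{t+h}=w_0\Psi\phi^S_t+\sum_{e\in S}w_e\mathcal{T}_e\phi^{S-e}_t$. Here the weights $w_0=(\frac{t}{t+h})^n\le 1$ and $w_e$ do not depend on the laws. $\Psi$ is the Markov kernel of the no-arrival evolution (reverse events only), and $\mathcal{T}_e$ is that of the evolution in which $e$ arrives in $(t,t+h]$. Markov kernels contract total variation. So your balance computation, carried out to $O(h^2)$, together with the induction hypothesis for $S-e$, gives $\|\phi^S_{t+h}-\mu_{S,t+h}\|_{\mathrm{TV}}\le\|\phi^S_t-\mu_{S,t}\|_{\mathrm{TV}}+O(h^2)$. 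This sums to zero over a partition of $[p,t]$.

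The paper instead closes by invoking the finiteness of arrivals and reverse events (\Cref{lem:reverse-events-bound}), so that the law is determined by successively transporting the initial uniform mass. That is the lemma's role here. It is not the right tool for your $O(h^2)$ claims, since a total bound of $|S|$ events says nothing about two events in one window of length $h$. What you need there is that kernel outputs on uniform inputs have bounded density.

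There is also a constant slip. Integrating \eqref{eq:poly-adjoint} over the forward output gives $\int\mu_{S-f,t}(dY)\,\ReverseKernel_{S-f,t,\varnothing,f}(Y,dX)=\frac{t}{p}\,r_{S-f,t,\varnothing,f}(X)\,\mu_{S-f,t}(dX)$, not $\frac{p}{t}\,r\,\mu$. It is this factor $\frac{t}{p}$, cancelling the Jacobian $\frac{p}{t}$, that makes the reverse deposit exactly $r$ times the uniform density rather than merely proportional to $r$. The identity $(1-r)+r=1$ requires exactly that.
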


\begin{proof}
    It suffices to prove the claim for $X_S$, and then the claim for $A$ follows from \Cref{lem:poly-latent,lem:latent-to-set}.
    We consider an equivalent process on the full latent vector $X=(X_e)_{e\in E}$, initialized with $X_e = T_e \sim \Unif[0, 1]$ for every $e\in E$ at time $0$.
    The latent vector goes through two types of events over time: (i) the arrival of $e \in \OPT(S)$ at $T_e = t$ and the forward kernel succeeds, and (ii) the reverse event of $e \in \OPT(S)$ at time $R_e = t$.

    We prove the stronger claim that $X\sim\Unif([0,1]^E)$ at every time by verifying the probability-mass balance of the process.
    Then, the condition on the set of arrived items $S$ at time $t$ simply corresponds to $X_S \in [0, t]^S$ and $X_{E-S} \in (t,1]^{E-S}$, which preserves uniformity of $X_S$.
    For this verification, we evaluate these two types of events under the candidate uniform law at time $t$.
    We show that, for each $e$, the probability mass transported by the events cancels.

    In event (i), the successful forward kernel calls transport away
\[
        r_{S-e,t,\varnothing,e}(X_{S-e})\,dX_{S-e} dX_e.
    \]
    Since the forward-kernel call happens as long as $X_e = T_e = t$, $X_{S-e} \le t$, and $X_{E-S} > t$, the conditional distribution of $X_{S-e}$ is uniform.
    By the adjoint identity (\Cref{lem:constructive-domination}, part (i)), the successful calls transport this mass to $\frac{p}{t}\,dY_{S-e}dX_e$.
    Further, the algorithm sets $Y_e \leftarrow g(X_e)$ so $dY_e = g'(X_e) dX_e = \frac{p}{t} dX_e$.
    Thus, the transported mass is $dY_{S-e}dY_e$.

    In event (ii), let $Y$ be the full latent vector before the reverse event.
    The reverse event of $e$ occurs when the increasing boundary $g(t)$ reaches $Y_e$ from below while $e$ is still rejected.
    Under the candidate uniform law, the incoming mass with $Y_{S-e}\le t$ and $Y_{E-S}>t$ is uniform in $Y_{S-e}\in[0,t]^{S-e}$.
    Again, by the adjoint identity (\Cref{lem:constructive-domination}, part (i)) and $\sigma_{e,\varnothing} = t$, integrating over the uniform $Y_{S-e}$, the reverse event transports back
\[
        \int dY_{S-e} \ReverseKernel_{S-e,t,\varnothing,e}(Y_{S-e},dX_{S-e}) dY_e = \frac{t}{p} r_{S-e,t,\varnothing,e}(X_{S-e})\,dX_{S-e} dY_e.
    \]
This is exactly $r_{S-e,t,\varnothing,e}(X_{S-e})\,dX_{S-e} dX_e$, since $X_e = t = g^{-1}(Y_e)$, and thus, $dY_e = \frac{p}{t} dX_e$.

    In sum, event (i) moves mass $r_{S-e,t,\varnothing,e}(X_{S-e})\,dX_{S-e}\,dX_e$ from the late slice $X_e \in [t, t+dt]$ to the early slice $X_e \in [g(t), g(t+dt)]$, where it has density $1$.
    Event (ii) moves all mass of the early slice, which lies above the survival threshold $g(t)$, back to the late slice, where it has density $r_{S-e,t,\varnothing,e}(X_{S-e})$.
    Hence, the two transports cancel under the candidate uniform law, which satisfies the mass-balance equations.
    There are finitely many arrivals and reverse events (\Cref{lem:reverse-events-bound}).
    These equations therefore determine the law uniquely by successively transporting the initial mass through the events.
    Since the initial law is uniform, $X$ remains uniform at every time.
\end{proof}

\begin{lemma}
    \label{lem:poly-complexity}
Efficient Survival Simulation has expected polynomial time and oracle complexity.
\end{lemma}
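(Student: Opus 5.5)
The plan is to bound the total work as a sum over events, namely arrivals and reverse events, and to show that each event costs expected polynomial time. There are exactly $|E|$ arrivals. By \Cref{lem:reverse-events-bound}, while the arrived set is $S$ there are at most $|S|\le|E|$ reverse events, so the total number of reverse events is at most $|E|^2$. Every event is processed with an explicit set $S$ of arrived elements, and only oracles over $S+e$ are used, so the unknown matroid model is respected.

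Next I will count the work per event outside the kernel calls. Testing $e\in\OPT(S+e)$ takes polynomially many comparison and independence queries via the greedy characterization. Computing $\OPT(S)$ to identify candidate reverse elements is also polynomial. Recomputing the reverse times $R_f=g^{-1}(X_f)$ is arithmetic on at most $|S|$ coordinates, and finding the next event is a minimum over these reverse times and the next arrival. So between kernel calls the cost is polynomial.

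The main step is the kernel calls. Each forward call $\ForwardKernel_{S,T_e,\varnothing,e}$ and reverse call $\ReverseKernel_{S-f,R_f,\varnothing,f}$ has expected polynomial cost by part (iv) of \Cref{lem:constructive-domination}, but only when its input is uniform. To supply this I will invoke \Cref{lem:poly-uniform}.
\begin{itemize}
\item Conditional on the arrived set $S$ at time $T_e$, the state $X_S$ is distributed as $\mu_{S,T_e}$. The forward call at the arrival of $e$ sees the same law, because $e$'s arrival is independent of $X_S$ given $S$.
\item For a reverse call, I must argue that the input $X_{S-f}$ is uniform on $[0,R_f]^{S-f}$ conditional on the trigger. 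This is exactly the computation in event (ii) of the proof of \Cref{lem:poly-uniform}: under the uniform law, the mass reaching the boundary $g(t)=X_f$ has a uniform conditional law for $X_{S-f}$.
\end{itemize}

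With uniform inputs in hand, linearity of expectation finishes the proof. I will write the total cost as a sum over event indices $k\le |E|+|E|^2$ of the cost of the $k$-th event, with a cost of $0$ if that event does not occur. Each term is bounded by conditioning on the history up to the event, under which the kernel input is uniform, and then applying part (iv). This gives a polynomial bound on the expected total.

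The main obstacle is that ``conditional on the history'' is stronger than what \Cref{lem:poly-uniform} states. That lemma gives uniformity only conditional on $S$ and the current time, whereas the event index and its timing depend on $X$ itself. I expect to handle this by conditioning only on $(S,t,\text{which element triggers})$ and integrating the expected cost against the event rate, rather than conditioning on the full history. Concretely, the expected total reverse-call cost equals
\[
\int_p^1 \sum_{f} (\text{rate of } f\text{-reverse at } t)\cdot \E[\text{cost}\mid \text{uniform input}]\,dt,
\]
and the integrated rate is at most the expected count, which is at most $|E|^2$. The forward calls are handled the same way, with the rate given by the arrival density.
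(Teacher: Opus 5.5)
Your proposal is correct and follows essentially the same route as the paper: at most $|E|$ forward calls and $O(|E|^2)$ reverse calls by \Cref{lem:reverse-events-bound}, each kernel input uniform conditional on the time, the arrived set, and the triggering element by the argument in the proof of \Cref{lem:poly-uniform}, and then part (iv) of \Cref{lem:constructive-domination}. Your rate-integration handling of the conditioning subtlety is a more explicit version of the paper's phrase that the kernel cost is bounded ``averaged over the incoming event mass.''
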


\begin{proof}
    The computation of $\OPT(S)$ has polynomial time and oracle complexity for matroids by the greedy algorithm.
    It remains to analyze the kernel calls.

    There are $|E|$ arrivals, and thus, at most $|E|$ forward kernel calls.
    Further, \Cref{lem:reverse-events-bound} implies that there are at most $O(|E|^2)$ reverse kernel calls.
    By the same arguments as in the proof of \Cref{lem:poly-uniform}, every kernel input is uniform conditional on time $t$, the arrived set of elements $S$, and the triggering element $e$.
Part (iv) of \Cref{lem:constructive-domination} bounds the expected kernel cost averaged over the incoming event mass by a polynomial in $|S| \le |E|$.
\end{proof}

\begin{proof}[Proof of \Cref{thm:main}]
    By \Cref{lem:poly-uniform} at time $t = 1$, the final set accepted by Efficient Survival Simulation follows distribution $\mathcal{D}_{E,1}$.
    By \Cref{lem:survival-competitive}, setting $p = \frac{1}{\e}$ gives acceptance probability $\frac{1}{\e}$ for every element in $\OPT$.
    The algorithm uses only comparisons and independence queries on arrived elements, and has expected polynomial time and oracle complexity by \Cref{lem:poly-complexity}.
\end{proof}

\subsection[Proof of Lemma~\ref{lem:constructive-domination}]{Proof of \Cref{lem:constructive-domination}}
\label{subsec:constructive-proof}

We first define the kernels recursively, following the cases in the proof of \Cref{lem:distribution-change}.
For a nonempty $S$, let $f$ be its biggest element.
Let $F(u)$ be the filtration obtained by adding $f$ to $F$ from time $u$ onward, so $F(p)=F+f$.
When $\sigma_{f,F}>\sigma_{f,F+e}$, use the same notation
\[
    \sigma_0=\sigma_{f,F} = \sigma_{e,F},
    \qquad
    \sigma_1=\sigma_{f,F+e} = \sigma_{e,F+f}
\]
as in that proof, and write the survival thresholds of $f$ under $F$ and $F+e$ as
\[
    a=p\ln\frac{\sigma_0}{p},
    \qquad
    b=p\ln\frac{\sigma_1}{p}.
\]

\paragraph{Forward Kernel.}
The cases below follow the proof of \Cref{lem:distribution-change}.
Relaxations of coefficients correspond to discarding probability mass by returning $\perp$, while applications of the induction hypothesis correspond to recursive calls to the forward kernel.
We explicitly associate each step with a corresponding inequality in the proof of \Cref{lem:distribution-change}.
If any recursive forward call fails, return $\perp$ immediately.

\begin{algorithm}{Forward Kernel $\ForwardKernel_{S,t,F,e}(X_S)$}
    \textbf{Base Cases:~} If $\sigma_{e,F}=p$, return $X_S$.
    If $S=\varnothing$, return the empty vector with probability $\frac{p}{\sigma_{e,F}}$, and $\perp$ with probability $1-\frac{p}{\sigma_{e,F}}$.\\[2ex]
Let $f$ be the biggest element in $S$, and $F(u)$ the filtration with $f$ added from time $u$ onward.

    \smallskip

    \textbf{Case 1:~} $\sigma_{f,F}=\sigma_{f,F+e}$.
    Set $u=\max\{p,X_f\}$, and let
    \[
        Y_{S-f}\leftarrow\ForwardKernel_{S-f,t,F(u),e}(X_{S-f}),
        \qquad Y_f\leftarrow X_f.
        \tag{Eqn.~\eqref{eqn:case-1}}
    \]

    \textbf{Case 2:~} $\sigma_{f,F}>\sigma_{f,F+e}$.
\begin{enumerate}[label=(\alph*)]
        \item \textbf{$f$ survives: $X_f\le a$.}
              \begin{enumerate}[label=(\roman*)]
                  \item If $\frac{\sigma_1}{\sigma_0}b < X_f \le a$, return $\perp$. \hspace*{\fill} (Eqn.~\eqref{eqn:case-2-a-1})
                  \item Otherwise, let
                        \[
                            Y_{S-f}\leftarrow\ForwardKernel_{S-f,t,F(p),e}(X_{S-f}),
                            \quad
                            Y_f\leftarrow\frac{\sigma_0}{\sigma_1}X_f. \tag{Eqn.~\eqref{eqn:case-2-a-2}}
                        \]
              \end{enumerate}
        \item \textbf{$f$ is rejected and early: $a<X_f\le p$.}
              Let
              \[
                  Y_{S-f}\leftarrow\ForwardKernel_{S-f,t,F(p),e}(X_{S-f}),
                  \qquad Y_f\leftarrow b+\frac{p-b}{p-a}(X_f-a).
                  \tag{Eqn.~\eqref{eqn:case-2-b-1}}
              \]

        \item \textbf{$f$ is late: $X_f=u>p$.}
\begin{enumerate}[label=(\roman*)]
                  \item With probability $\frac{\sigma_{e,F(u)}}{\sigma_0}$, let
                        \[
                            Y_{S-f}\leftarrow\ForwardKernel_{S-f,t,F(u),e}(X_{S-f}),
                            \qquad
                            Y_f\leftarrow X_f.
                            \tag{Eqn.~\eqref{eqn:case-2-c-1}}
                        \]
                  \item Otherwise, let
                        \[
                            Y_{S-f}\leftarrow\ForwardKernel_{S-f,t,F(u)+e,f}(\ForwardKernel_{S-f,t,F(u),e}(X_{S-f})),
                            \qquad
                            Y_f\sim\Unif(b,p).
                            \tag{Eqn.~\eqref{eqn:case-2-c-2}}
                        \]
              \end{enumerate}
    \end{enumerate}
\end{algorithm}

The maps $Y_f\leftarrow\frac{\sigma_0}{\sigma_1}X_f$ in Case 2(a)(ii) and $Y_f\leftarrow b+\frac{p-b}{p-a}(X_f-a)$ in Case 2(b) are the unique nondecreasing maps transporting $\Unif[0,\frac{\sigma_1}{\sigma_0}b]$ to $\Unif[0,b]$ and $\Unif(a,p]$ to $\Unif(b,p]$, respectively.
They satisfy $X_f\le Y_f$ for surviving $f$ and $X_f\ge Y_f$ for rejected $f$.
In Case 2(c)(ii), an independent draw $Y_f\sim\Unif(b,p)$ automatically satisfies $Y_f<X_f$, since $X_f>p$.

\paragraph{Reverse Kernel.}
To reverse the successful transports of the forward kernel, we invert the scalar maps and reverse the order of recursive calls.
We explicitly associate each step in the reverse kernel with its corresponding forward case.
We prefix case numbers with $\ForwardKernel$ or $\ReverseKernel$ to indicate the forward or reverse kernel, respectively, as in $\ForwardKernel$-2(c) and $\ReverseKernel$-2(b).

An early rejected output, handled by $\ReverseKernel$-2(b), can come from $\ForwardKernel$-2(b) or $\ForwardKernel$-2(c)(ii).
Their contributions determine the following mixture.
For $\sigma_{f,F}>\sigma_{f,F+e}$, define the probability
\[
    \lambda=\frac{\sigma_0}{\sigma_1}\frac{p-a}{p-b}.
\]

We have $0\le\lambda\le1$, because $s\mapsto s(1-\ln\frac{s}{p})$ is nonnegative and decreasing on $[p,1]$.

Further, define the unnormalized density from $\ForwardKernel$-2(c)(ii) as
\[
    \rho(u) =
    \underbrace{\vphantom{\Bigg|} \left(1-\frac{\sigma_{e,F(u)}}{\sigma_0}\right)}_{\substack{\text{\vphantom{Ag}branch}\\\text{\vphantom{Ag}probability}}}
    \cdot
    \underbrace{\vphantom{\Bigg|} \frac{p}{\sigma_{e,F(u)}} \cdot \frac{p}{\sigma_{f,F(u)+e}}}_{\substack{\text{\vphantom{Ag}success probabilities}\\\text{\vphantom{Ag}of two recursive calls}}},
    \qquad p<u<\sigma_0.
\]

\begin{algorithm}{Reverse Kernel $\ReverseKernel_{S,t,F,e}(Y_S)$}
    \textbf{Base Cases:~} If $\sigma_{e,F}=p$, return $Y_S$.
    If $S=\varnothing$, return the empty vector.\\[2ex]
Let $f$ be the biggest element in $S$, and $F(u)$ the filtration with $f$ added from time $u$ onward.

    \smallskip

    \textbf{Case 1:~} $\sigma_{f,F}=\sigma_{f,F+e}$.
    Set $u=\max\{p,Y_f\}$, and let
    \[
        X_{S-f}\leftarrow\ReverseKernel_{S-f,t,F(u),e}(Y_{S-f}),
        \qquad X_f\leftarrow Y_f.
        \tag{$\ForwardKernel$-1}
    \]

    \textbf{Case 2:~} $\sigma_{f,F}>\sigma_{f,F+e}$.
    \begin{enumerate}[label=(\alph*)]
        \item \textbf{$f$ survives: $Y_f\le b$.}
              Let
              \[
                  X_{S-f}\leftarrow\ReverseKernel_{S-f,t,F(p),e}(Y_{S-f}),
                  \qquad X_f\leftarrow\frac{\sigma_1}{\sigma_0}Y_f.
                  \tag{$\ForwardKernel$-2(a)(ii)}
              \]

        \item \textbf{$f$ is rejected and early: $b<Y_f\le p$.}
              \begin{enumerate}[label=(\roman*)]
                  \item With probability $\lambda$, let
                        \[
                            X_{S-f}\leftarrow\ReverseKernel_{S-f,t,F(p),e}(Y_{S-f}),
                            \qquad X_f\leftarrow a+\frac{p-a}{p-b}(Y_f-b).
                            \tag{$\ForwardKernel$-2(b)}
                        \]
                  \item Otherwise, draw $u\in(p,\sigma_0)$ with density proportional to $\rho(u)$, and let
                        \[
                            X_{S-f}\leftarrow\ReverseKernel_{S-f,t,F(u),e}(\ReverseKernel_{S-f,t,F(u)+e,f}(Y_{S-f})),
                            \qquad X_f\leftarrow u.
                            \tag{$\ForwardKernel$-2(c)(ii)}
                        \]
              \end{enumerate}

        \item \textbf{$f$ is late: $Y_f=u>p$.}
              Let
              \[
                  X_{S-f}\leftarrow\ReverseKernel_{S-f,t,F(u),e}(Y_{S-f}),
                  \qquad X_f\leftarrow Y_f.
                  \tag{$\ForwardKernel$-2(c)(i)}
              \]
    \end{enumerate}
\end{algorithm}

\begin{proof}[Proof of \Cref{lem:constructive-domination}]
    We prove the lemma by induction on the size of $S$.
The base cases $S=\varnothing$ and $\sigma_{e,F}=p$ are immediate.

    Next consider any nonempty set $S$ and suppose the lemma holds for smaller sets.
    We first recall the following spanning-time identities.
    Let $f$ be the biggest element in $S$.
    If $\sigma_{f,F}=\sigma_{f,F+e}$, then \Cref{lem:spanning-time-change}, with $e$ and $f$ interchanged, implies $\sigma_{e,F+f}=\sigma_{e,F}$.
    Since $F\subseteq F(u)\subseteq F+f$, by \Cref{lem:spanning-time-monotone} we have
    \begin{equation}
        \label{eq:poly-equal-case-sigma}
        \sigma_{e,F(u)}=\sigma_{e,F},
        \qquad u\in[p,t].
    \end{equation}
If $\sigma_{f,F} > \sigma_{f,F+e}$, \Cref{lem:spanning-time-change} gives
    \begin{equation}
        \label{eq:poly-exchange-identities}
        \sigma_{e,F}=\sigma_0,
        \qquad
        \sigma_{e,F(u)}=\min\{\sigma_0,\max\{\sigma_1,u\}\},
        \qquad
        \sigma_{f,F(u)+e}=\min\{u,\sigma_1\}.
    \end{equation}

    \paragraph{Adjoint Identity, Survivor Preservation, and Monotonicity.}
We verify the first three properties together by matching each successful forward branch with its corresponding reverse branch.
    In each case, we apply the inductive adjoint identity to the recursive calls and check the remaining scalar factors.
    For survivor preservation and monotonicity, the recursive calls use the filtrations determined by the clipped arrival time of $f$ on the corresponding side, so induction handles the coordinates in $S-f$.

    \bigskip
    \underline{$\ForwardKernel$-1 $\leftrightarrow$ $\ReverseKernel$-1}.~
The coordinate of the biggest element is copied, and \eqref{eq:poly-equal-case-sigma} gives the same factor $\frac{p}{\sigma_{e,F}}$ in the recursive adjoint identity.
    By induction, the adjoint identity holds.
    The coordinate $X_f=Y_f$ has the same survival threshold on both sides, so survivor preservation and monotonicity also hold.

    \bigskip
    \underline{$\ForwardKernel$-2(a)(ii) $\leftrightarrow$ $\ReverseKernel$-2(a)}.~
The forward map $Y_f \leftarrow \frac{\sigma_0}{\sigma_1}X_f$ multiplies the density by $\frac{\sigma_1}{\sigma_0}$, while the recursive call succeeds with probability $\frac{p}{\sigma_1}$ on a uniform input.
    Their product is $\frac{p}{\sigma_0}$ as required.
    $\ReverseKernel$-2(a) applies the inverse scalar map and the recursive reverse kernel.
    By induction, the adjoint identity holds.
    Further, we have $X_f\le\frac{\sigma_1}{\sigma_0}b\le a$ and $Y_f\le b$, so $f$ survives on both sides.
    Finally, the scalar map gives $X_f = \frac{\sigma_1}{\sigma_0}Y_f\le Y_f$ as required.

    \bigskip
    \underline{$\ForwardKernel$-2(b) $\leftrightarrow$ $\ReverseKernel$-2(b)(i)}.~
If $a=p$, neither branch is taken: the forward input interval is empty and the reverse branch has probability $\lambda=0$.
    Otherwise, by induction, the recursive forward-kernel call succeeds with probability $\frac{p}{\sigma_1}$ on a uniform input, and produces a uniform output conditional on success.
    The scalar map $Y_f \leftarrow b+\frac{p-b}{p-a}(X_f-a)$ multiplies the density by $\frac{p-a}{p-b}$.
    Thus, this branch contributes density relative to the uniform law $\frac{p}{\sigma_1}\frac{p-a}{p-b} =\frac{p}{\sigma_0}\lambda$,
    which equals the required adjoint factor $\frac{p}{\sigma_0}$ times the probability $\lambda$ of choosing $\ReverseKernel$-2(b)(i).
    So the adjoint identity follows.
    Further, since $X_f>a$ and $Y_f>b$, $f$ is rejected on both sides.
    Finally, the reverse map gives $X_f-Y_f=\frac{(a-b)(p-Y_f)}{p-b}\ge0$, since $a\ge b$ and $Y_f\le p$.

    \bigskip
    \underline{$\ForwardKernel$-2(c)(ii) $\leftrightarrow$ $\ReverseKernel$-2(b)(ii)}.~
By induction, the two recursive forward-kernel calls succeed with probabilities $\frac{p}{\sigma_{e,F(u)}}$ and $\frac{p}{\sigma_{f,F(u)+e}}$ on a uniform input, and each produces a uniform output conditional on success.
    Multiplying by the branch probability $1-\frac{\sigma_{e,F(u)}}{\sigma_0}$ gives $\rho(u)$.
    The draw $Y_f\sim\Unif(b,p)$ has density $\frac{1}{p-b}$ over $(b,p)$.
    Thus, this branch contributes density relative to the uniform law
\begin{align*}
        \frac{1}{p-b} \int_p^{\sigma_0}\rho(u)\,du
         & =\frac{1}{p-b} \int_p^{\sigma_1}\left(1-\frac{\sigma_1}{\sigma_0}\right)\frac{p}{\sigma_1}\frac{p}{u}\,du
        +\frac{1}{p-b} \int_{\sigma_1}^{\sigma_0}\left(1-\frac{u}{\sigma_0}\right)\frac{p}{u}\frac{p}{\sigma_1}\,du
        \tag{by \eqref{eq:poly-exchange-identities}}                                                                 \\
         & =\frac{p}{\sigma_0}(1-\lambda),
\end{align*}
which equals the required adjoint factor $\frac{p}{\sigma_0}$ times the probability $1-\lambda$ of choosing $\ReverseKernel$-2(b)(ii).
Within this branch, the contribution from each $u$ is proportional to $\rho(u)$, matching the draw in $\ReverseKernel$-2(b)(ii).
    Therefore, the adjoint identity holds in this case.

    Further, the two recursive calls change the filtration as $F(u)\rightarrow F(u)+e\rightarrow F(p)+e$ and $\ReverseKernel$-2(b)(ii) reverses these changes.
    By induction, both calls preserve the survivor set on $S-f$.
    Since this set is unchanged, both calls move each remaining coordinate in the same direction, proving monotonicity.
    Finally, we have $X_f>p$ and $b<Y_f<p$, so $f$ is rejected on both sides and $X_f>Y_f$.

    \bigskip
    \underline{$\ForwardKernel$-2(c)(i) $\leftrightarrow$ $\ReverseKernel$-2(c)}.~
By induction, the recursive forward-kernel call succeeds with probability $\frac{p}{\sigma_{e,F(u)}}$ on a uniform input and produces a uniform output conditional on success.
    The branch is chosen with probability $\frac{\sigma_{e,F(u)}}{\sigma_0}$.
    Their product is $\frac{p}{\sigma_0}$ as required.
    $\ReverseKernel$-2(c) applies the inverse scalar map and the recursive reverse kernel.
    Further, we have $X_f=Y_f>p$, so $f$ is rejected on both sides and the coordinate inequality holds with equality.

    \paragraph{Complexity.}
We first bound the expected number of recursive calls on a uniform input.
    For the forward kernel, we show that a uniform input gives at most one direct recursive call in expectation.
    This is trivially true for base cases.
    For non-base cases, only $\ForwardKernel$-2(a)(i) makes no call, and only $\ForwardKernel$-2(c)(ii) can make two calls.
    It suffices to compare their probabilities.

    Branch $\ForwardKernel$-2(a)(i) corresponds to $\frac{\sigma_1}{\sigma_0}b<X_f\le a$, an event of probability $\frac{1}{t}(a-\frac{\sigma_1}{\sigma_0}b)$.
    The second call in $\ForwardKernel$-2(c)(ii) occurs if this branch is chosen and the first call succeeds.
    Given any $X_f=u>p$, this has probability
    \[
        \left(1-\frac{\sigma_{e,F(u)}}{\sigma_0}\right)\frac{p}{\sigma_{e,F(u)}}
        =\frac{p}{\sigma_{e,F(u)}}-\frac{p}{\sigma_0}.
    \]

    By \eqref{eq:poly-exchange-identities}, $\sigma_{e,F(u)}=\sigma_1$ for $p<u\le\sigma_1$, $\sigma_{e,F(u)}=u$ for $\sigma_1<u<\sigma_0$, and $\sigma_{e,F(u)}=\sigma_0$ for $u\ge\sigma_0$.
    Thus, the probability of making two calls is
    \[
        \frac1t\left(
        \int_p^{\sigma_1}\left(\frac{p}{\sigma_1}-\frac{p}{\sigma_0}\right)du
        +\int_{\sigma_1}^{\sigma_0}\left(\frac{p}{u}-\frac{p}{\sigma_0}\right)du
        \right)
        =\frac1t\left(a-b-p^2\left(\frac1{\sigma_1}-\frac1{\sigma_0}\right)\right)
    \]

    Since $\sigma_1 \le \sigma_0$ and $b\ge0$, this is at most $\frac{1}{t}(a-\frac{\sigma_1}{\sigma_0}b)$, the probability of making no call.

    Conditional on $X_f$ and the branch choice, each first recursive call has a uniform input.
    By the adjoint identity, each second call also has a uniform input conditional on the first succeeding.
    Thus, the expected number of calls at each recursion depth is at most one.
    Since every recursive call reduces the size of $S$, the expected total number of forward calls is at most $|S|+1$.

    Next, consider the reverse kernel.
    The same branch correspondence used in the adjoint proof matches entire successful forward executions with reverse executions, with the same number of recursive calls.
    Applying the inductive adjoint identity at each call gives the same factor $\frac{p}{\sigma_{e,F}}$ for these matched executions.
    Hence, on a uniform input, the expected number of reverse calls equals the expected number of forward calls conditional on success.
    Since the success probability is $\frac{p}{\sigma_{e,F}} \ge \frac{1}{\e}$ by $p \ge \frac{1}{\e}$ and $\sigma_{e,F} \le 1$, the forward-call bound gives an expected number of reverse calls at most $\e(|S|+1)$.

    It remains to bound the work within each call.
    Most steps are elementary.
    Finding the biggest element $f$ uses polynomially many comparison queries.
    The draw from the density proportional to $\rho$ in $\ReverseKernel$-2(b)(ii) takes constant expected time by standard inverse-transform and rejection sampling.
    Finally, $\sigma_{e,F}$ can be computed in polynomial time using independence queries by checking spanning at $p$ and at each release time in $F$.
\end{proof}

\section*{Declaration of AI Usage}
The algorithm and analysis were developed through extensive interactions with ChatGPT Astra.
The author directed the model to explore different constructions and made several judgment calls, including focusing the exploration on simulating an offline distribution of accepted elements, observing that the proofs of \Cref{lem:distribution-change} and \Cref{thm:simulation} are essentially algorithmic, and pursuing a full resolution after the model produced partial results.\footnote{These include the same $3.1462$-competitive algorithm as \cite{ChanLW:arXiv:2026}, a $2.9654$-competitive algorithm for a family of non-linear matroids, and on 19 September 2026, the computationally inefficient version of the $\e$-competitive algorithm.}
The author has verified the correctness of the algorithm and proofs and assumes full responsibility for the contents.

\bibliographystyle{plainnat}
\bibliography{matroid-secretary}

\end{document}